\documentclass[journal ,12]{IEEEtran}
\usepackage{colortbl}
\usepackage[table]{xcolor}
\usepackage{tikz}
\usepackage{algorithm}
\usepackage{algorithmicx}
\usepackage{algpseudocode}
\usepackage{url}
\usepackage{float} 
\usepackage{amsmath,amsfonts,amssymb,mathrsfs,amsthm,amsbsy,graphicx,color,bm,algpseudocode,balance,mathtools,xfrac,nccmath,siunitx,cite}
\usepackage{multirow}

\newtheorem{lem}{Lemma}

\newtheorem{rem}{Remark}
\newcommand{\E}{$\mathcal E$} 
\newcommand{\R}{$\mathcal R$}
\newcommand{\T}{$\mathcal T$}
\newcommand{\Tk}{$\mathcal{T}_k$}
\definecolor{ambe}{rgb}{1.0, 0.49, 0.0}

\usepackage[font=scriptsize,labelfont=bf]{caption}
\usepackage[caption=false]{subfig} 

 \usepackage[utf8]{inputenc}
\usepackage{booktabs, caption, makecell,xspace}

\usepackage{threeparttable}

\usepackage{pgfplots}
\pgfplotsset{compat=newest}
\usetikzlibrary{plotmarks}
\usetikzlibrary{arrows.meta}
\usepgfplotslibrary{patchplots}
\usepackage{grffile}

\newcommand{\bc}{\text{BackCom}\xspace}
\newcommand{\mbc}{\text{MoBC}\xspace} 
\newcommand{\bbc}{\text{BiBC}\xspace}
\newcommand{\abc}{\text{AmBC}\xspace}

 \newtheorem{theorem}{Theorem}

 \theoremstyle{definition}
 \makeatletter
\newcommand{\printfnsymbol}[1]{%
  \textsuperscript{\@fnsymbol{#1}}%
}
\makeatother

\usepackage{pifont}
\newcommand{\cmark}{\ding{51}}%
\newcommand{\xmark}{\ding{55}}%
\newcommand{\lmark}{\textbf{--}}

\begin{document}
\bstctlcite{IEEEexample:BSTcontrol}

\title{RIS-Assisted Energy Harvesting  Gains for Bistatic Backscatter Networks:  Performance Analysis and RIS Phase Optimization} \vspace{-5mm}

\author{Diluka  Galappaththige\textsuperscript{\textasteriskcentered}\thanks{\printfnsymbol{1}D.  Galappaththige and F. Rezaei contributed equally to this work.}, \IEEEmembership{Member, IEEE}, Fatemeh Rezaei\printfnsymbol{1}, \IEEEmembership{Member, IEEE},   Chintha Tellambura, \IEEEmembership{Fellow, IEEE,}   Sanjeewa Herath, \IEEEmembership{Member, IEEE}
\thanks{D. Galappaththige, F. Rezaei, and C.~Tellambura with the Department of Electrical and Computer Engineering, University of Alberta, Edmonton, AB, T6G 1H9, Canada (e-mail: \{diluka.lg, rezaeidi, ct4\}@ualberta.ca).  \\
\indent S. Herath is with the Huawei Canada, 303 Terry Fox Drive, Suite 400, Ottawa, Ontario K2K 3J1 (e-mail: sanjeewa.herath@huawei.com).}  \vspace{-10mm}}

\maketitle

\begin{abstract}
Inexpensive tags powered by energy harvesting (EH) can realize green (energy-efficient) Internet of Things (IoT) networks. However, tags are vulnerable to energy insecurities, resulting in  poor  communication ranges, activation distances, and data rates.  To overcome these challenges, we  explore the use of a reconfigurable intelligent surface (RIS) for  EH-based IoT networks. The  RIS is deployed to enhance RF power at the tag, improving EH capabilities. We consider linear and non-linear EH models and analyze single-tag and multi-tag scenarios. For single-tag networks, the tag's  maximum received power  and the reader's signal-to-noise ratio  with the optimized RIS phase-shifts are derived.   Key metrics, such as received power, harvested power, achievable rate, outage probability, bit error rate, and diversity order, are also evaluated. The  impact of RIS phase shift quantization errors is also studied. For the multi-tag case,  an algorithm to compute the optimal RIS phase-shifts is developed.  Numerical results and simulations demonstrate significant improvements compared to the benchmarks of  no-RIS case and random RIS-phase design. For instance, our optimal design with a \num{200}-element RIS increases  the  activation distance  by \qty{270}{\percent} and \qty{55}{\percent} compared to those benchmarks. 
In summary, RIS deployment improves the energy autonomy of tags while maintaining the basic tag design intact. 
\end{abstract}

\begin{IEEEkeywords}
Bistatic backscatter communication (\bbc),  Reconfigurable intelligent surface (RIS), Performance analysis.
\end{IEEEkeywords}

\IEEEpeerreviewmaketitle
\section{Introduction}

\subsection{The Problems with Energy Harvesting  Backscatter Tags}

Parcel tracking using passive electronic tags is one of the many potential applications of the Internet of Things (IoT). The global parcel volume surpassed \num{131} billion in \num{2020}, showing a \qty{27}{\percent} year-over-year increase. In the United States alone, \num{59} million parcels were generated daily in \num{2021}, projected to reach \num{25}-\num{40} billion with a \qty{5}{\percent}-\qty{10}{\percent} annual growth rate from \num{2022}-\num{2027}. Similar growth trends are observed globally. Barcode-based tracking is currently employed, but electronic tag-based tracking offers advantages such as enhanced labor productivity, throughput, warehousing efficiency, and real-time data accuracy for quality control. Tags without batteries are particularly suitable because of their cost-effectiveness and compact size. Their   applications include  medical and healthcare, agriculture, livestock, logistics, retail chains, and passive IoT networks \cite{Yang2010, Fuqaha2015, Huawei}. These tags have low-cost and low-power circuits with limited processing capabilities. They rely on backscatter modulation, a process described in detail in \cite{Diluka2022, Rezaei2023}, where they reflect radio-frequency (RF) signals to communicate with the reader.

Passive tags encounter two main problems related to their reliance on RF energy harvesting (EH) for power: activation failure and energy outage (EO). Activation failure occurs when the tag fails to reach the activation threshold ($P_b$), typically around \qty{-20}{\dB m} \cite{tags}, required to initiate the EH circuitry \cite{Wang2017}. Imperfections in the matching network between the tag's antenna and the EH circuit can cause this failure. The matching network aims to align the complex impedance of the EH circuit with the antenna's impedance, optimizing power transfer and minimizing signal reflections. However, the EH circuit's impedance depends on the incident input power due to nonlinear devices, leading to reduced circuit efficiency with changes in input power. The second problem is EO. Ambient energy sources are unpredictable with  RF power density values as low as  \num{1}$\sim$\qty{100}{\uW/\cm^2}  and varying with distance \cite{Huawei}. As a result, there is a risk of an EO where the tag does not reach the activation threshold. These problems cause  ultra-low  power (\qty{}{n\W}-\qty{}{\uW}),  short communication ranges ($\leq\qty{6}{\m}$), short activation distances, and low data rates ($\leq \qty{1}{bps/\Hz}$). It is clear that all these problems are initiated whenever the incident RF energy is low.  Addressing that issue is the main focus of this paper.    

Backscatter networks can be categorized into three types: monostatic, bistatic, and ambient. In monostatic systems, the reader and emitter are co-located, resulting in doubled path loss \cite{Kimionis2014}. Ambient systems rely on reflecting existing RF signals, which are highly unpredictable. Bistatic systems, on the other hand, offer better support for applications such as warehouses (Fig.\ref{fig:systemGeneral}). These systems deploy dedicated RF emitters, either single or multiple, to provide energy to the tags and enable backscatter modulation. By optimizing the locations of multiple emitters, a larger area can be covered (Fig.\ref{fig:systemGeneral}), maximizing coverage and performance. Dedicated emitters have advantages over ambient signals, including predictability, reduced interference, control over the system, and knowledge of ambient signal parameters \cite{rezaei2020large}. However, the high cost, complexity, and transmit powers associated with dedicated emitters can be problematic.

These considerations motivate the following questions: 1) what is the best way to increase the chance of the incident RF  power on the tag  exceeding $P_b?$ 2) How can that goal be reached    without increasing dedicated RF emitters' cost and energy expenditure?

\begin{figure}[!tbp]\vspace{-0mm}
  \centering
        \def\svgwidth{210pt} 
     \fontsize{7}{7}\selectfont 
     \graphicspath{{./Figures/}}
\begingroup%
  \makeatletter%
  \providecommand\color[2][]{%
    \errmessage{(Inkscape) Color is used for the text in Inkscape, but the package 'color.sty' is not loaded}%
    \renewcommand\color[2][]{}%
  }%
  \providecommand\transparent[1]{%
    \errmessage{(Inkscape) Transparency is used (non-zero) for the text in Inkscape, but the package 'transparent.sty' is not loaded}%
    \renewcommand\transparent[1]{}%
  }%
  \providecommand\rotatebox[2]{#2}%
  \newcommand*\fsize{\dimexpr\f@size pt\relax}%
  \newcommand*\lineheight[1]{\fontsize{\fsize}{#1\fsize}\selectfont}%
  \ifx\svgwidth\undefined%
    \setlength{\unitlength}{918.72207642bp}%
    \ifx\svgscale\undefined%
      \relax%
    \else%
      \setlength{\unitlength}{\unitlength * \real{\svgscale}}%
    \fi%
  \else%
    \setlength{\unitlength}{\svgwidth}%
  \fi%
  \global\let\svgwidth\undefined%
  \global\let\svgscale\undefined%
  \makeatother%
  \begin{picture}(1,0.69024823)%
    \lineheight{1}%
    \setlength\tabcolsep{0pt}%
    \put(0,0){\includegraphics[width=\unitlength]{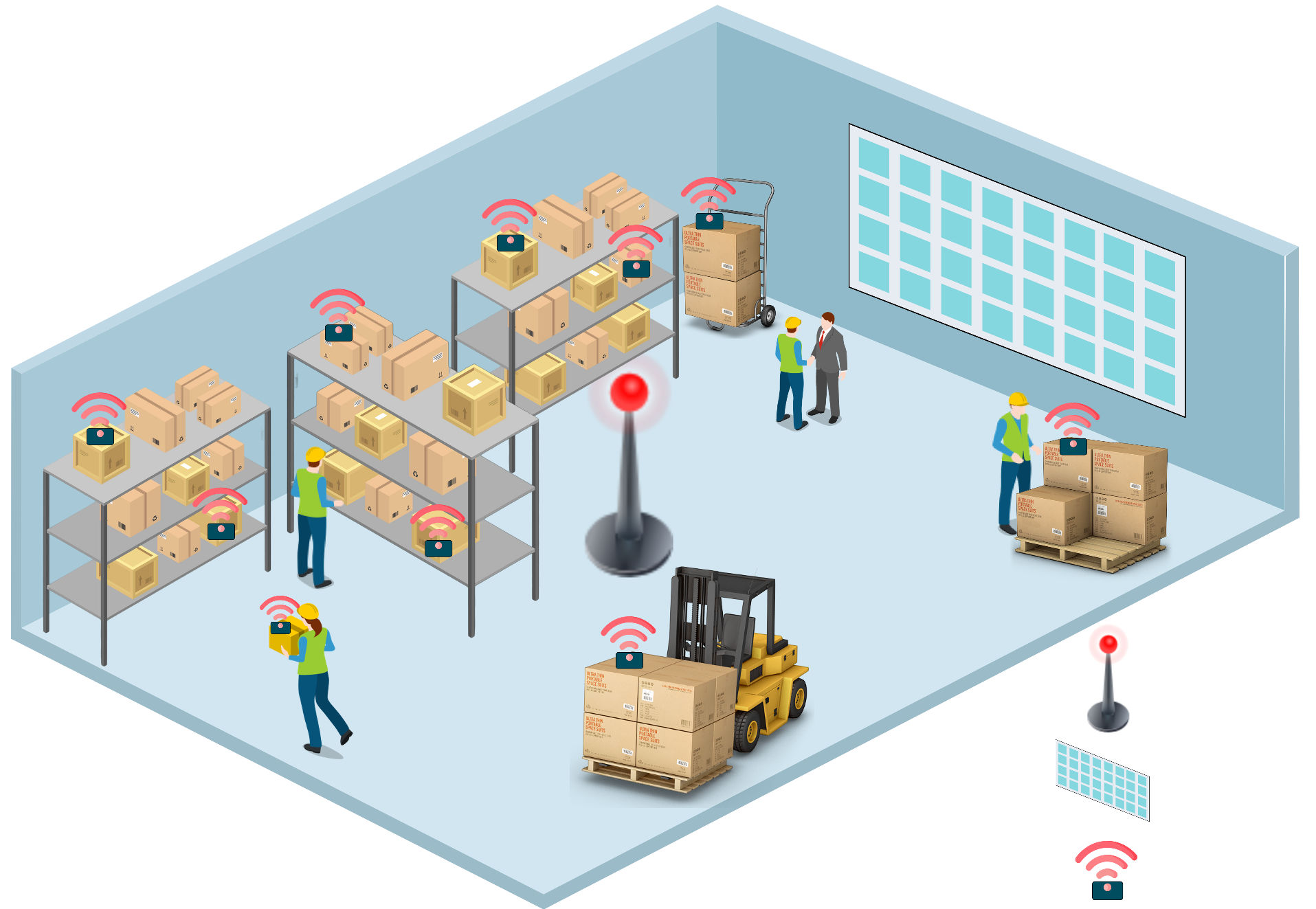}}%
    \put(0.93723736,0.00786808){\makebox(0,0)[t]{\lineheight{1.25}\smash{\begin{tabular}[t]{c}Tag\end{tabular}}}}%
    \put(0.96600089,0.13554467){\makebox(0,0)[t]{\lineheight{1.25}\smash{\begin{tabular}[t]{c}Emitter\end{tabular}}}}%
    \put(0.94036495,0.06800979){\makebox(0,0)[t]{\lineheight{1.25}\smash{\begin{tabular}[t]{c}RIS\end{tabular}}}}%
  \end{picture}%
\endgroup%
 \vspace{-0mm}
     \caption{A warehouse use case of \bbc network.}\label{fig:systemGeneral}
\end{figure}

\begin{figure}[!tbp]\vspace{-0mm}
   \centering
  \def\svgwidth{210pt} 
     \fontsize{7}{6}\selectfont 
     \graphicspath{{./Figures/}}
     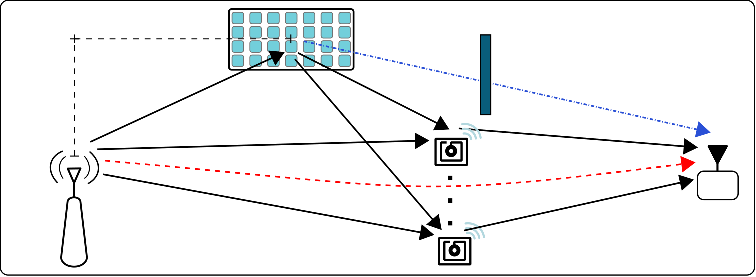 \vspace{-0mm}
     \caption{An RIS-assisted  \bbc. Red lines denote interference signals and  $d_{f_k}, d_{\mathbf{g}_k}, d_{u_k} $ and $d_{\mathbf{h}}$ are respectively \E-\Tk{}, RIS-\Tk{}, \Tk{}-\R, and \E-RIS distances. }\label{fig:system_model}
\end{figure}

\subsection{Existing  Solutions}Various solutions can enhance the performance of passive tags, including the use of multi-antenna configurations \cite{Yang2015}, energy beamforming techniques, RF energy harvesters with improved activation thresholds \cite{Song2022}, and channel coding methods \cite{Rezaei2023, Alevizos2014}. Additionally, active tunnel diodes in the tags have been explored as a solution \cite{Cui2022}. However, these techniques may not be feasible for passive tags due to their limited processing capability, power constraints, and cost limitations. Increasing the transmit power at the emitter or deploying multiple emitters is also not an energy-efficient (green) solution.

To overcome these issues, study   \cite{Jia2022} is the first to  propose the use of a reconfigurable intelligent surface (RIS) in \bbc networks. A RIS is a synthetic surface consisting of numerous passive reflectors, also known as meta-material elements \cite{Poulakis2022, Tapio2021}. These reflectors have the ability to independently modify the characteristics of incoming electromagnetic waves, such as phase, amplitude, frequency, or polarization \cite{Poulakis2022, Tapio2021}. Through real-time adaptation, the reflectors can be programmed to dynamically adjust to the wireless channel conditions. Each reflector is equipped with one or more switches, with a switching frequency of up to \qty{5}{\MHz}, which ensures minimal switching time compared to the channel coherence time \cite{nayeri2018reflectarray}. By dynamically adjusting the reflectors, the RIS controller coordinates their actions to create constructive and destructive interference patterns in the reflected waves as required \cite{Poulakis2022, Tapio2021}.

A RIS consumes a few watts  during the reconfiguration states and much less during idle states, e.g.,  \qty{6}{\mW} per element  for \num{4}-bits resolution phase shifting \cite{Huang2019}. A RIS can provide significant gain even without an amplifier, ranging from \qtyrange{30}{40}{\dB}  relative to the  isotropic radiation  depending on the size of the surface and frequency \cite{Poulakis2022,Tapio2021}. Typically, the size of a single  reflector is much smaller than the signal wavelength ($\lambda$), ranging between $\lambda/10$ and $\lambda/5$ \cite{Tapio2021}. The RIS  integrates with existing networks without modifying the basic network design \cite{Emil2020}.

\subsection{Problem Statement and Contributions}
We address the problem of maximizing the number of passive tags supported in a given area, such as a large warehouse (Fig.\ref{fig:systemGeneral}) while minimizing the number of carrier emitters required. As mentioned before, passive tags have limited activation/communication ranges and data rates, relying on a minimum RF power (activation threshold) for self-activation. Existing solutions like multiple-antenna tags, coding, batteries, multiple emitters, repeaters, and relays have certain limitations. To overcome these challenges, we adopt the approach proposed in \cite{Jia2022} by deploying a RIS  in the \bbc network (Fig.\ref{fig:system_model}). The RIS increases the incident RF power on the tags, allowing them to harvest more energy and improving their reliability and communication range. Optimal placement and configuration of the phase shift elements in the RIS can be determined to achieve maximum gains. Moreover, continuous phase shifts (i.e., infinite levels in $[-\pi,\pi]$) are not viable because more switches per reflector will cost more for the RIS. For  \num{16} phase shifts, $\log_2(16)=4$ switches are needed \cite{Wu2020}. Hence, to keep the size, cost, and power consumption of the RIS down,  phase quantization is essential. 

We propose using a RIS in the \bbc network (Fig. \ref{fig:system_model}) to enhance its energy security and performance. Initially, we consider the case of a single tag and assume a linear energy EH model for the tag. The RIS acts as a reflector, increasing the power delivered to the passive tags by reflecting the RF emitter signal. By optimizing the phase shifts of the RIS, we aim to improve harvested power, communication range, and data rate. Our results provide insights into the improvements in harvested energy, data rate, and reliability achieved through the use of the RIS. Furthermore, to support passive IoT applications involving multiple tags, we develop RIS optimization techniques that aid tag activation while enhancing communication performance.

{
Our study differs from \cite{Jia2022} in several aspects. Firstly, we focus on increasing the energy security of passive tags, resulting in improved activation distance, communication range, and data rates. In contrast, \cite{Jia2022} primarily aims to enhance the communication performance of passive/semi-passive tags. We achieve this by employing a RIS in the source-to-tag links, which enhances the incident energy at the tags. In contrast, \cite{Jia2022} uses a RIS to increase the signal-to-noise ratio (SNR) at the reader, serving a different purpose. Secondly, our study utilizes fully passive tags with reflective modulation schemes such as binary phase-shift keying (BPSK), implemented by switching the tag's impedance between two levels. These passive modulation schemes can be easily extended to higher-order quadrature phase-shift keying (QPSK) and $M$-quadrature amplitude modulation (QAM) \cite{Diluka2022, Rezaei2023}. On the other hand, \cite{Jia2022} employs semi-passive and passive tags with frequency-shift keying (FSK) modulation to enable multiple access. However, implementing FSK requires more from the tags as they need to generate distinct carrier frequencies. Therefore, achieving higher-order FSK may not be feasible with simple tags. Thirdly, in \cite{Jia2022}, the beamformer at the multi-antenna emitter and the RIS phase shifts are optimized to minimize the transmit power at the emitter. In our study, we aim to maximize the sum rate of multiple tags by optimizing RIS phase shifts with a single-antenna emitter that maintains constant transmit power. }

{As well, there are analytical and algorithmic differences between our study and \cite{Jia2022}. While \cite{Jia2022} uses a minorization maximization (MM) algorithm with a semi-passive tag, we propose a closed-form solution for RIS phase shifts that simultaneously maximize the received power of the passive tag and the SNR at the reader. This analytical framework allows us to quantify the benefits of utilizing an RIS for a single passive tag. We evaluate system efficiency, reliability, and error performance through metrics such as harvested power, achievable rate, outage probability, bit error rate (BER), achievable diversity order, and RIS phase shift quantization errors. Additionally, in the multi-tag scenario, \cite{Jia2022} extends the single-tag MM algorithm with minor modifications. In contrast, we employ a fractional programming-based optimization approach that enables multiple access for passive tags. Our approach focuses on maximizing the sum rate of the tags while ensuring tag activation and reducing tag-to-tag interference.  }

Specifically, the  main contributions of this paper  can be summarized as follows: 
 \begin{enumerate}
      \item In the single-tag network, we optimize the RIS phase shifts to maximize the received signal power at the tag. This  has a closed-form solution, allowing us to determine the optimal RIS phase shifts directly. Furthermore, we analyze the impact of the RIS on the EH  process by deriving the average harvested power at the tag. This quantifies the role of  the RIS in enhancing the EH performance.

     \item  To analyze the performance of the system, we require the probability density function (PDF) and cumulative distribution function (CDF) of the optimal SNR at the reader. However, obtaining analytical expressions for these functions appears intractable, even in the single-tag case. To overcome this challenge, we test  Gaussian and Gamma approximations for the PDF and CDF by assuming a Nakagami-$m$ fading. Among the two approximations, we select the Gamma approximation as it provides a more accurate analysis of the harvested power, achievable rate, outage probability, BER, and achievable diversity order.
     
      \item We also investigate the effect of RIS phase shift errors  to quantify the effects  of hardware limitations and imperfect channel estimation in practical scenarios. 
      
     \item We further explore the optimization of the RIS for a multi-tag scenario. To this end, we introduce an iterative algorithm that aims to maximize the achievable sum rate while ensuring that each tag meets its minimum power requirement. This goal  is achieved by optimizing the RIS phase shifts iteratively.

     \item 
Our simulations and numerical results validate the accuracy of the derived analytical results, demonstrating a close agreement between them. These simulations provide valuable insights for system design and assessment, aiding in the practical implementation and evaluation of the proposed approach.
 \end{enumerate}
We show that RIS optimization provides a green solution for energizing and improving the activation distance and communication range of passive tags, enabling massive connectivity without modifications or additional processing at the tag. This makes passive tags suitable for forming the backbone of passive IoT networks. The analytical results presented in this paper offer valuable insights for practical applications. Before proceeding to the technical contributions, we provide an overview of related works in the field of RIS-assisted \bc literature.

\subsection{Previous Contributions on RIS-Assisted \bc}

Existing works often overlook the tag activation requirement and primarily concentrate on utilizing RIS to enhance the tag-reflected signal for improved communication. However, this approach assumes either the use of active tags powered by batteries, where energy harvesting is unnecessary, or tags that can be activated with extremely low power ($P_b = -\infty$), which is not representative of actual tags. These works optimize the RIS without considering the EH constraint, aiming to improve outage, sum rates, error probabilities, or other metrics. References \cite{Zhao2020Backscatter, Khel2022, Nemati2020, Chen2021Backscatter, Solanki2022, Altuwairgi2022, Zuo2021, Chen2021} fall into this category and differ significantly from our contribution. A summary of relevant and recent contributions is provided in Table \ref{tab:summary}.

{\linespread{1.0}
\begin{table*}[t!]\vspace{0mm}
 \caption{Summary of related works.}
    \begin{center}
\begin{threeparttable}
    \begin{tabular}{|c|c|c|c|c|}
    \hline
        Setup & Reference & Tag type & EH constraint & Contribution\\ \hline \hline
        \multirow{2}{*}{\mbc}  &\cite{Zhao2020Backscatter} & \lmark & \xmark & BER analysis \\ \cline{2-5} 
          & \cite{Khel2022} & \lmark & \xmark & BER, outage, and rate analysis\\
        \hline
        \multirow{7}{*}{\abc}  &\cite{Nemati2020} & Passive & \xmark & BER and coverage analysis\\ \cline{2-5} 
          & \cite{Chen2021Backscatter} & \lmark& \xmark & BER analysis\\
          \cline{2-5} 
          & \cite{Yang2021} & Passive/Semi-passive & \cmark &  Throughput and coverage analysis \\
          \cline{2-5} 
          & \cite{Solanki2022} & Semi-passive & \xmark & Outage and spectral efficiency analysis\\
          \cline{2-5} 
          & \cite{Altuwairgi2022} & Passive & \xmark& Outage and BER analysis\\
          \cline{2-5} 
          & \cite{Galappaththige2023} & Passive& \cmark& Rate maximization\\
          \cline{2-5} &{~~\cite{Chen2021}}$^*$ & \lmark& \xmark & Transmit power minimization\\
          \hline
          \multirow{4}{*}{\bbc}  &\cite{Zuo2021} & \lmark & \xmark& Rate maximization \\ \cline{2-5} 
          & \cite{Jia2022} & Semi-passive/Passive & \cmark& Transmit power minimization\\
          \cline{2-5} 
          & \textbf{This Paper} & Passive & \cmark & \begin{tabular}{@{}c@{}}BER, outage, rate, and harvested power analysis \\ Rate maximization\end{tabular}\\
        \hline
    \end{tabular}
    \begin{tablenotes}\footnotesize
\item[$^*$]Symbiotic radio - the reader supports both the primary and the backscatter transmission
\end{tablenotes}
\end{threeparttable}
\end{center}\vspace{-0mm}
    \label{tab:summary}
\end{table*}}

In contrast to the works \cite{Zhao2020Backscatter, Khel2022, Nemati2020, Chen2021Backscatter, Solanki2022, Altuwairgi2022, Zuo2021, Chen2021}, a few studies \cite{Yang2021, Jia2022, Galappaththige2023} have explored the use of RIS to enhance the EH potential of backscatter networks. In \cite{Jia2022}, RIS was applied to a \bbc network, serving as the inspiration for our study. The main purpose of the RIS in \cite{Jia2022} was to improve the signal-to-noise ratio (SNR) at the reader. In contrast, we utilize the RIS to enhance the RF signal received by the tags. Earlier, we discussed the differences between \cite{Jia2022} and our study.

Another study  \cite{Yang2021} utilizes a RIS to enable direct tag-to-tag communication when the direct source-to-tag links are blocked. If the tag receives signal power exceeding a certain threshold $P_b$, it first harvests energy and then reflects its data. Otherwise, it operates using a battery. Hence,  \cite{Yang2021} considers the use of  semi-passive tags, which   is fundamentally different from ours. In our previous work \cite{Galappaththige2023}, we also explored the use of a RIS to energize a single tag in an \abc  network, where the RIS is optimized to satisfy the tag's EH constraint while minimizing interference from the ambient source. Once again, the system model in that study is distinct from the one considered here.

\textit{Structure}: This paper is structured as follows. Section \ref{system_modelA} introduces the system and channel model, including the tag and signal models. The analysis of the single-tag system is presented in Section \ref{sec:singleTag}. Section \ref{sec:multi_tag} explores the multi-tag scenario. The analytical results are validated through simulation examples in Section \ref{sim}. Finally, Section \ref{conclusion} provides the conclusion of the paper and outlines potential future research directions.

\textit{Notation}: For random variable $X$,  $f_X(\cdot)$ and $F_X(\cdot)$ denote PDF and  CDF.  $\mathbb{E} \{ \cdot\}$ and $\mathbb{V}\rm{ar}\{ \cdot\}$ denote the  expectation and variance. Lowercase bold  and uppercase bold denote vectors and matrices. $\mathbf{A}^\mathrm{T}$, $\mathbf{A}^\mathrm{H}$ denote transpose and Hermitian transpose of  matrix $\mathbf{A}$.  Moreover, the positive part of real $x$ is denoted by $[x]^+= \max(0,x)$. The gamma function $\Gamma(a)$ is given in~\cite[Eq. (8.310.1)]{Gradshteyn2007}, $D_v(x)$ is parabolic cylinder function \cite[eq.~(9.240)]{Gradshteyn2007}, and $\gamma(n,x)$ is the lower incomplete gamma function \cite[Eq.~(8.350)]{Gradshteyn2007}.  The   complementary error function is $ {\rm{erfc}} (x)$ \cite[eq.~(8.25.4)]{Gradshteyn2007} and  $Q(x) = \frac 1 2  {\rm{erfc}} (x/\sqrt{2})$ is the Gaussian $Q$-function. Finally, $\mathcal{CN}(\bm{\mu},\mathbf R ) $ is a complex Gaussian vector with  mean   $\bm \mu$ and co-variance matrix  $\mathbf R$.

\section{System Model and Preliminaries}\label{system_modelA}


\subsection{System and Channel Models}\label{sec:sytem_model}
We consider a RIS-assisted \bbc setup consisting of a single-antenna emitter (\E),  $K$-single-antenna passive tags, a single-antenna reader (\R), and a RIS with $N$ passive reflective elements (Fig. \ref{fig:system_model}). We use  \Tk{} to denote the $k$th tag. 
Since tags are batteryless  and entirely rely on EH, we deploy a RIS  to deliver as much RF power as possible. To do this, the RIS controller sets the states of individual reflectors to adjust the phase shifts intelligently  to maximize the received power at each tag. For this,  the controller requires channel state information (CSI) for all the channels in Fig. \ref{fig:system_model}. We assume that 
the controller has a backhaul connection  between  the RIS and \E{}, which can provide all such necessary information \cite{Wu2019}. For brevity, we index  the set of RIS  passive elements   as $\mathcal{N}=\{1,\ldots,N\}$ and the set of tags as $\mathcal{K}=\{1,\ldots,K\}$.

{We consider a block,  flat-fading channel model where  the channel response remains constant over the duration of a block and changes independently from block to block \cite{Long2020}.} During each fading block, the direct channel coefficients in the \E-\Tk{} link and the \Tk-\R{} link are denoted as $f_k$ and $u_k$, respectively. Moreover, the channel coefficient vectors in the \E-RIS link and the RIS-\Tk{} link  are denoted as  $\mathbf{g}_k = [g_{k,1}, \ldots, g_{k,N}]^{\rm{T}} \in \mathbb{C}^{N \times 1}$ and $\mathbf{h} = [h_1, \ldots, h_N]^{\rm{T}} \in \mathbb{C}^{N \times 1}$, respectively. Here, $g_{k,n}$ and $h_n$ for $n \in \mathcal{N}$ and $k \in \mathcal{K}$ are the channels between the $n$th element of the RIS and  \Tk, and  the $n$th element of the RIS and  \E, respectively. All channel envelopes are assumed to be independent Nakagami-$m$ distributed, where $m$ is the shape parameter \cite{Galappaththige2020}. 
A unified representation of all four channels $\mathcal{A} = \{f_k, u_k,h_n,g_{k,n}\}$ is thus given as 
\begin{eqnarray}
a = \alpha_a \exp({j \theta_a}),
\end{eqnarray}
where $ \alpha_a$ is the envelope of the $a$ and  $\theta_a \in [-\pi,\pi]$ is the phase of $a$. The PDF of 
  $ \alpha_a$ is given as
 \begin{eqnarray}\label{Nakagami}
f_{\alpha_a}(x) = \frac{2m_a^{m_a}x^{2m_a-1}}{\Gamma{(m_a)} \Omega_a^{m_a}} \exp{\left(\frac{-m_a x^2}{\Omega_a}\right)} ,
\end{eqnarray} 
where $m_a$ is the shape parameter and $\Omega_a = m_a \zeta_a$ is the scaling parameter, in which $\zeta_a$ accounts for the large-scale fading/path-loss. It should be noted that, 
since the RIS reflective elements are co-located, the large-scale fading parameters are the same for all of them, i.e.,  $\zeta_{g_{k,n}} = \zeta_{g_k} $ and $\zeta_{h_n} =  \zeta_{h}$ for $n \in \mathcal{N}$.

\begin{rem}\label{Rayl}
The Nakagami-$m$  model is versatile to represent a  variety of propagation environments. For instance,  $m = 1$ represents  Rayleigh fading, and $m \rightarrow \infty$   represents the  no fading scenario. Hence, our  performance analysis  covers the special  case of Rayleigh fading channels \cite{Zhao2020Backscatter, Yang2020}. 
\end{rem}

We make the following key assumptions:
\begin{enumerate}
    \item[A1:] 
    Perfect channel state information (CSI) is available for the \E-\Tk, \E-RIS, RIS-\Tk, \Tk-\R, and \E-\R {} channels in the system. However, channel estimation poses challenges due to the passive nature of the RIS and tags, as well as the large number of RIS elements. Sophisticated methods and novel pilot designs are required for accurate estimation. While a comprehensive treatment of channel estimation is beyond the scope of this work, a possible estimation strategy for the system in Fig. \ref{fig:system_model} can be envisioned. Initially, the RIS is set to the non-reflecting state, allowing estimation of the direct \E-\R{} link, the \E-\Tk-\R{} link, and the \E-\Tk{} link using existing methods \cite{Ma2018, rezaei2023timespread}. Subsequently, the cascaded \E-RIS-\R{} channel can be resolved into its \E-RIS and RIS-\R{} components using appropriate techniques \cite{Wei2021}. Finally, the RIS-\Tk{} channels can be obtained using information from other available paths. Thus, this assumption is justified because  emerging techniques can address the challenges associated with channel estimation.
    
    \item[A2:] 
    We assume that the RIS-\R{} link (blue line in Fig. \ref{fig:system_model}) is blocked or negligible because the RIS focuses the reflected beam toward the tags \cite{Emil2022}. The \E-\R{}  signal (red line in Fig. \ref{fig:system_model})  is interference felt at \R.   When \R{} has   CSI  and knows the signal of   \E,  it can  cancel this interference signal \cite{Jia2022, Tao2021, Fasarakis2015}.  If full CSI is unavailable or cancellation is impossible,  our analytical  results serve as fundamental limits or upper bounds of achievable performance.

    \item[A3:] We consider the linear model  for the EH process at \Tk. Even though practical EH circuits have non-linear operating characteristics and an activation threshold, the linear model can be accurate in certain operating regions.

\end{enumerate}
\subsection{Passive Tag Operation}\label{passive_tag}
As mentioned before, we consider the use of passive tags only as they have the potential to enable massive connectivity at a low cost. Thus, \Tk{}   does not have batteries  and entirely relies on the harvested energy from the RF signal transmitted by \E. Specifically, the harvested  energy  powers \Tk's circuit operation and enables it to  backscatter data to \R{} simultaneously \cite{HoangBook2020}. Thus, the amount of harvested energy is the fundamental parameter that ensures the success or failure of  \Tk{}  to communicate. We next set up the basics necessary to estimate the  EH ability of \Tk. 

Let $P_{T,k}$ be the received power at \Tk's antenna. The reflection coefficient of \Tk{} is then given as $\sqrt{\beta}q_m,$ where  $q_m$ is the normalized backscatter symbol selected from a multi-level (${M}$-ary) modulation (i.e., $\vert q_{{m}} \vert^2 \le 1$) and $0 <\beta < 1$ is the fraction of power reflected at \Tk. Therefore,  when \Tk{} reflects the RF signal, $\beta P_{T,k}$ of the RF power is reflected  while the rest $P_{l,k} = (1-\beta)P_{T,k}$ is absorbed for EH purposes.  The EH circuits convert  the RF power $P_{l,k}$ into  direct current (DC) power by using a rectifier. With linear and nonlinear models for the energy harvester, the harvested DC power ($P_{h,k}$) can be defined as \cite{Wang2017}
\begin{eqnarray} \label{eqn:EH_models}
  P_{h,k} = \begin{cases}
    \phi P_{l,k} , & \text{Linear},\\
    \Phi(P_{l,k}) , & \text{Nonlinear},
    \end{cases} 
\end{eqnarray} 
where $\phi$ is the power conversion efficiency, typically measured around  \qtyrange{31.8}{61.4}{\%} at \qty{2.45}{\GHz} \cite{Chen2017eta}, independent of the RF power $P_{l,k}$. 
Moreover, $\Phi(\cdot)$ represents the nonlinear EH function \cite{Wang2017}. Although the linear model offers simplicity, non-linear EH models are also widely used (see \cite{Wang2017,Wang2020} and references therein). However, our solution can be easily extended to such  models as well. We omit the details for brevity.

\subsection{Signal Model}
Node \E{}  transmits an unmodulated  carrier signal $\sqrt{P}s$  to energize tags, where $s$ is the carrier signal satisfying $ \mathbb{E}\{|s|^2\}=1$ and $P$ is the carrier transmission power. In general,  $ s$ should be designed to maximize the EH potential of tags. Thus, waveforms with high peak-to-average power ratio (PAPR) are used  as these increase the RF-to-DC conversion efficiency of the tag's  EH circuit \cite{Clerckx2016}. Hence,  $s$ can be white noise, with  a flat power spectral density and high PAPR \cite{Collado2014}. Other designs include  orthogonal frequency-division multiplexing (OFDM), chaotic, and multisine waveforms \cite{Clerckx2016, Collado2014}.

The signal $\sqrt{P}s$   is received at \Tk{} through the direct channel, $f_k$, and the reflective channels of the RIS, $\mathbf{h}$ and $\mathbf{g}_k$. The received signal at \Tk{} is thus given as\footnote{Note that, since  \Tk{} is a passive device  without active RF components, the noise contributed by \Tk{} can be neglected \cite{Lyu2019}.}
\begin{eqnarray}\label{eqn::forward}
    y_k = \sqrt{P} f_k s +  \sqrt{P} \mathbf{g}_k^{\rm{T}} \mathbf{\Theta} \mathbf{h}  s. 
\end{eqnarray}
The first term is due to  the direct path and the second is from the RIS's reflective elements.  In \eqref{eqn::forward}, $\mathbf{\Theta} \in \mathbb{C}^{N \times N}$ is a diagonal matrix that captures the reflection properties (the magnitude of attenuation and the phase shift) of the RIS elements, i.e., $\mathbf{\Theta} = {\rm{diag}} \left(\eta_1 \exp{(j\theta_1)}, \dots, \eta_N \exp{(j\theta_N)} \right)$, where $\eta_n \exp{(j\theta_n)}$ is the reflection coefficient of  the $n$-th RIS  element  with  the magnitude of attenuation $\eta_n$ and the phase shift $\theta_n \in [-\pi,\pi]$. Discrete phase shifts, resulting in a  phase quantization error,  will be treated  in Section \ref{phase_error}. { Moreover, here we only consider a passive RIS  without active amplification, i.e.,  $\eta_k \leq 1, \forall k.$  However, an active RIS that can amplify and reflect incident RF signals, i.e., $\eta_n> 1$, is a potential future extension of this work.} 

Using \eqref{eqn::forward}, the received power at \Tk{} is given as 
\begin{eqnarray}\label{Received_power_k}
    P_{T,k} = P \vert f_k +  \mathbf{g}_k^{\rm{T}} \mathbf{\Theta} \mathbf{h} \vert ^2.
\end{eqnarray}
\Tk{} harvests energy from the received power, $ P_{T,k}$ \eqref{Received_power_k}, and modulates the received signal with
its normalized ${M}$-ary backscatter signal,
$q_k,  \mathbb{E}\{|q_k|^2\} =1$, to be transmitted to \R. The received signal at \R{} is thus given as 
\begin{eqnarray}\label{eqn::backscatter}
    y_R = \sqrt{\beta P}  \sum\nolimits_{k \in \mathcal{K}}  u_k (f_k + \mathbf{g}_k^{\rm{T}} \mathbf{\Theta} \mathbf{h})  s q_k + z,
\end{eqnarray} 
where $z \sim \mathcal{CN}(0,\sigma^2_z)$ is  additive white Gaussian noise (AWGN) with mean \num{0} and variance $\sigma^2_z$.

\vspace{-4mm}
\section{Single-Tag System}\label{sec:singleTag}
Here, we consider $K=1$, i.e., the single-tag setup. The RIS  phase shifts should be optimized to   maximize both the received signal power at \T{} and the achievable rate at \R{}. To this end, the  optimization problem is formulated as follows:
 \begin{subequations} \label{eqn:P1}
	\begin{eqnarray}
		\mathcal{P}_{S,1}: \underset{\mathbf{\Theta} }{\text{maximize}} && {{\rm{log}}_2(1+\gamma)}, \label{eqn:P1_objective}\\
		\text{subject to}  
		&&  (1-\beta) P_T  \geq P_b', \label{eqn:EH_P1}\\
		&& |\bar{\theta}_{n}| \leq 1, \label{eqn:theta_P1}
	\end{eqnarray}	 
\end{subequations} 
where $\bar{\theta}_{n}=\eta_n \exp{(j\theta_n)}$ and $  P_b' \triangleq  P_b/\phi $ is the respective threshold value. Note that this constraint is equivalent to the nonlinear EH case with $P_b'= \Phi^{-1}(P_b)$.
Hence, we adopt the linear EH model  \eqref{eqn:EH_models} due to its tractability.
Here, $P_T$ is the received signal power at \T{} and given as
\begin{eqnarray}\label{eqn::Rx_power}
     P_T =  {P} \left \vert \left(\alpha_f e^{j\theta_f} + \sum\limits_{n \in \mathcal{N}}{\eta_n \alpha_{g_n} \alpha_{h_n} e^{j(\theta_{g_n}+\theta_{h_n} + \theta_n)}} \right) \right \vert^2.
\end{eqnarray} 
When the tag reflects this power level,  the corresponding received SNR at \R{} is given as 
\begin{eqnarray}\label{eqn::received_SNR}
       \gamma &=& \bar{\gamma} \vert u (f + \mathbf{g}^{\rm{T}} \mathbf{\Theta} \mathbf{h}) \vert^2 \nonumber\\
       &=& \bar{\gamma} \left \vert \alpha_u e^{j\theta_u} \! \left(\!\alpha_f e^{j\theta_f}\! +\! \sum\limits_{n \in \mathcal{N}}{\! \eta_n \alpha_{g_n} \alpha_{h_n} e^{j(\theta_{g_n}+\theta_{h_n} + \theta_n)}} \!\right) \right \vert^2\!,\quad 
\end{eqnarray} 
where $\bar{\gamma}= P \beta/\sigma^2_z$.

\subsection{Proposed Solution}
To maximize the received power at \T{} (which will also  maximize the received SNR at \R{} if the interference is negligible), the received signal at \T{} through the RIS (the signal terms inside the summation of \eqref{eqn::Rx_power}), should be constructively added to the signal received from the direct path $f$, by compensating the phase distortion effect of the multipath channel,  i.e., in \eqref{eqn::Rx_power},  $\theta_f=\theta_{g_n}+\theta_{h_n} + \theta_n$.
Thus, the phase shift at the $n$th RIS element  should be adjusted as \cite{Wu2019}
\begin{eqnarray}\label{eqn::phase_adjust}
    \theta^{\star}_n = \max_{-\pi \le\theta_n \le \pi} P_T = \theta_f - (\theta_{g_n}+\theta_{h_n}), ~ n \in \mathcal{N}.\quad 
\end{eqnarray} 
Thereby, the optimal received power at \T{} is derived as
\begin{eqnarray}\label{eqn::Opt_Rx_power}
    P_T^{\star} =  {P} \left \vert \left(\alpha_f  + \sum\nolimits_{n \in \mathcal{N}}{\eta_n \alpha_{g_n} \alpha_{h_n} } \right) \right \vert^2.
\end{eqnarray} 
Similarly, by using \eqref{eqn::phase_adjust}, the optimal received SNR at \R{} is derived as
\begin{eqnarray}\label{eqn::optimal_SNR}
    \gamma^{\star} = \bar{\gamma} \left \vert \alpha_u  \left(\alpha_f + \sum\nolimits_{n \in \mathcal{N}}{\eta_n \alpha_{g_n} \alpha_{h_n} } \right)\right \vert^2.
\end{eqnarray}

Next, we derive the approximate distributions of the  optimal received power \eqref{eqn::Opt_Rx_power} and SNR  \eqref{eqn::optimal_SNR} for the single-tag system. We develop Gaussian and Gamma approximations and compare them briefly, ultimately selecting the Gamma approximation for further analysis. Using it, we analyze various aspects of the single-tag system, including harvested power, achievable rate, outage probability, bit error rate (BER), and diversity order. Additionally, we investigate the impact of phase quantization error on the achievable rate.

\subsection{Average Harvested Power}\label{harvested_power}
We first analyze the impact of the EH process before proceeding to the performance analysis section. 
As mentioned before, \T{} activation depends on  the amount of harvested power, i.e., the sensitivity threshold, $P_b$, which is about \qty{-20}{\dB m} for commercial passive RFID tags \cite{tags}. If \T{} cannot exceed this threshold, it will not be operational to reflect the RF signal to communicate its  data. Thus, we can envision an EO, which will cripple the system.  To understand this scenario, we will analytically determine the average harvested power below.  Note that this derivation is limited to  the linear EH model only because of its simplicity.   However,  the   achievable rate, outage, and BER  depend on the amount of reflected power at \T{} and hence the reflection coefficient, $\alpha$.

The instantaneous harvested power at \T{} is given as $P_h=\phi(1-\beta)P_T^{\star}$, where $\phi$ is in \eqref{eqn:EH_models} and $P_T^{\star}$ is the received power at \T{} \eqref{eqn::Opt_Rx_power}. The average harvested power  is given as
\begin{eqnarray}\label{eqn:avg_HP}
    \bar{P}_h =  \phi
    (1-\beta) \mathbb{E}\{P_T^{\star} \}.
\end{eqnarray} 
To derive $\mathbb{E}\{P_T^{\star} \}$, first, we define $ X \triangleq \sum_{n \in \mathcal{N}}{\eta_n \alpha_{g_n} \alpha_{h_n}} $ and then, $\mathbb{E}\{P_T^{\star} \}$ is given as
\begin{eqnarray}\label{eqn:avg_opt_power}
     \mathbb{E}\{P_T^{\star} \} = P \left(\sigma_{\alpha_f}^{2} + \mu_{\alpha_f}^2 + \sigma_{X}^{2} +\mu_{X}^{2} + 2\mu_{\alpha_f}  \mu_{X} \right),
\end{eqnarray} 
where $\mu_X$ and $\sigma^2_{X}$ are respectively given as
\begin{subequations}\label{eqn::mean_var}
    \begin{eqnarray}
        \mu_X &=& \sum \nolimits_{n \in \mathcal{N}}{\eta_n \mu_{\alpha_{g_n}} \mu_{\alpha_{h_n}}},\quad \\ 
        \sigma^2_X &=&  \sum \nolimits_{n \in \mathcal{N}}{\left(\mu^{(2)}_{x_n} -\mu_{x_n}^2\right)},
    \end{eqnarray}
\end{subequations}
in which $ \mu^{(2)}_{x_n} = {\eta_n^2 \mu^{(2)}_{\alpha_{g_n}}\mu^{(2)}_{\alpha_{h_n}}}$, $\mu_{x_n} = {\eta_n \mu_{\alpha_{g_n}} \mu_{\alpha_{h_n}}}$, and
\begin{subequations}
    \begin{eqnarray}\label{nakagami_mean_variance}
        \mu_{\alpha_a}^{(m)} &=&  \frac{\Gamma(m_a+\frac{m}{2})}{\Gamma(m_a)} \left(\frac{\Omega_a}{m_a} \right)^{\frac{m}{2}},\quad \\
        \sigma^2_{\alpha_a} &=& \Omega_a \left( 1-\frac{1}{m_a}  \left(  \frac{\Gamma(m_a+\frac{1}{2})}{\Gamma(m_a)} \right)^2\right),
    \end{eqnarray} 
\end{subequations}
for $a \in  \mathcal{A}$. Besides, $\mu_V = \mathbb{E}\{V\}$,  $\sigma^2_V = \mathbb{V}\text{ar} \{ V\}$ and $ \mu^{(m)}_V = \mathbb{E}\{V^m\}$.

\subsection{Statistics of  the Optimal Received Power and SNR}\label{Statistical_Characterization}

To derive the distribution of the optimal  SNR \eqref{eqn::optimal_SNR}, we  consider $\gamma^{\star} = \bar{\gamma} \Lambda^2$, where $\Lambda  \triangleq \alpha_u  Y$ and $Y  \triangleq \alpha_f + X$. The $\alpha_u, \alpha_f, \alpha_{h_n}$, and $ \alpha_{g_n},  \forall n$ are independent  Nakagami-$m$  variables, and the exact derivations of the PDFs of $Y$, $\Lambda$, $P_T^{\star}$, and $\gamma^{\star}$ are  analytically intractable.  To overcome this challenge, we will develop two approximations, i.e., Gaussian and Gamma, of $Y$ and $\Lambda$ to derive the approximate distribution of $P_T^{\star}$ and $\gamma^{\star}$. We will verify these via simulations. 

\begin{theorem} \label{Gaussian_app}
\textit{\textbf{Gaussian Approximation}}:
Using the moment matching technique, $Y$ and $\Lambda$ can be approximated as  Gaussian variables with $Y \sim \mathcal{N}(\mu_{Y}, \sigma^2_{Y})$ and $\Lambda \sim \mathcal{N}(\mu_{\Lambda}, \sigma^2_{\Lambda})$, within the interval $[0,\infty)$, where  $\mu_Y = \mu_{\alpha_f} + \mu_X$, $\sigma^2_Y  = \sigma^2_{\alpha_f} + \sigma^2_{X} $, and $\mu_{\Lambda} = \mu_{\alpha_u} \mu_{Y}$, $\sigma^2_{\Lambda} = \mu^{(2)}_{\alpha_u} \mu^{(2)}_{Y} -  \mu_Y^2 \mu_{\alpha_u}^2$.

\end{theorem}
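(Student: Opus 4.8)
The plan is to treat this as a standard moment-matching argument: posit that $Y$ and $\Lambda$ are Gaussian and then pin down the two free parameters (mean and variance) by computing the exact first two moments of each quantity. The only genuine modeling step is the choice of the Gaussian family, which I would motivate by a central limit argument; everything else is a moment computation using independence and the Nakagami-$m$ moments in \eqref{nakagami_mean_variance}.

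First I would justify the Gaussian shape for $Y = \alpha_f + X$. Writing $X = \sum_{n \in \mathcal{N}} x_n$ with $x_n = \eta_n \alpha_{g_n}\alpha_{h_n}$, the summands are independent across $n$ and each has finite mean and variance (Nakagami variables possess moments of all orders). Hence for a large RIS (large $N$) the Lyapunov/Lindeberg CLT makes $X$ approximately Gaussian, and adding the single finite-moment term $\alpha_f$ leaves a quantity that is still approximately Gaussian. This is the heuristic that licenses the Gaussian form; its accuracy is to be confirmed by simulation. I would then compute the moments of $Y$: by linearity, $\mu_Y = \mathbb{E}\{\alpha_f\} + \mathbb{E}\{X\}$, and by independence of $\alpha_{g_n}$ and $\alpha_{h_n}$ within each product, $\mathbb{E}\{X\} = \sum_n \eta_n \mu_{\alpha_{g_n}}\mu_{\alpha_{h_n}} = \mu_X$, giving $\mu_Y = \mu_{\alpha_f}+\mu_X$. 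For the variance, independence of $\alpha_f$ from $X$ and of the $x_n$ among themselves yields $\sigma^2_Y = \sigma^2_{\alpha_f} + \sum_n \mathrm{Var}(x_n)$, and each $\mathrm{Var}(x_n) = \mu^{(2)}_{x_n} - \mu^2_{x_n}$ with $\mu^{(2)}_{x_n} = \eta_n^2 \mu^{(2)}_{\alpha_{g_n}}\mu^{(2)}_{\alpha_{h_n}}$, reproducing \eqref{eqn::mean_var}. The Nakagami raw moments $\mu^{(m)}_{\alpha_a}$ themselves follow from integrating $x^m$ against the density \eqref{Nakagami}, a standard Gamma integral that gives the closed form in \eqref{nakagami_mean_variance}.

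For $\Lambda = \alpha_u Y$ I would use independence of $\alpha_u$ and $Y$: then $\mu_\Lambda = \mathbb{E}\{\alpha_u\}\mathbb{E}\{Y\} = \mu_{\alpha_u}\mu_Y$ and $\mathbb{E}\{\Lambda^2\} = \mathbb{E}\{\alpha_u^2\}\mathbb{E}\{Y^2\} = \mu^{(2)}_{\alpha_u}\mu^{(2)}_Y$, where $\mu^{(2)}_Y = \sigma^2_Y + \mu_Y^2$. Subtracting the squared mean gives $\sigma^2_\Lambda = \mu^{(2)}_{\alpha_u}\mu^{(2)}_Y - \mu_{\alpha_u}^2\mu_Y^2$, exactly as stated. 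Because $Y$ and $\Lambda$ are nonnegative by construction, the fitted Gaussians are interpreted as truncated to $[0,\infty)$, which matches the support stated in the theorem.

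The main obstacle is conceptual rather than computational: the Gaussian claim for $\Lambda$ is on weaker footing than for $Y$. The product $\alpha_u Y$ of a Nakagami factor and an (approximately) Gaussian factor is not driven to a Gaussian by any CLT in $N$, so its Gaussianity rests solely on matching the first two moments and must be validated empirically. I expect this mismatch---together with the fact that a Gaussian assigns positive mass to negative values of an intrinsically positive variable---to be precisely why the subsequent analysis ultimately favors the Gamma approximation.
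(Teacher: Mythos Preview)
Your proposal is correct and follows essentially the same route as the paper: compute the exact first two moments of $Y$ and $\Lambda$ via independence (and the Nakagami moment formulas), then moment-match to a Gaussian truncated to $[0,\infty)$. The paper's appendix is terser---it omits the CLT motivation you supply for $Y$ and simply states ``we use the moment matching technique''---but the computations and the truncation step are identical, and the paper likewise writes down the normalized truncated Gaussian PDF/CDF with a factor $\Psi = 1/Q(-\mu_C/\sigma_C)$.
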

\begin{proof}
    See Appendix \ref{Gaussian}.
\end{proof}

\begin{theorem}  \label{Gamma_app}
\textit{\textbf{Gamma Approximation}}:
$Y$ and $\Lambda$ can  be approximated as  Gamma variables with $ Y \sim \Gamma(k_{Y}, \lambda_{Y})$ and $ {\Lambda} \sim \Gamma(k_{\Lambda}, \lambda_{\Lambda})$, where $k_{C} = \mu^2_{C}/\sigma^2_{C}$ and $\lambda_{C} = \sigma^2_{C}/\mu_{C}$ for $C\in \{Y,\Lambda\}$ are the first and second moments \cite{florescu2014probability,papoulis2002probability}. Hence, the CDF and PDF of $Y$ and $\Lambda$ are  given as
\begin{subequations} 
    \begin{eqnarray}\label{gamma_appro}
        F_{C}(r) &=&  \frac{1}{\Gamma(k_{C})} \gamma(k_{C}, \frac{r}{\lambda_{C}}), \quad \\
        f_{C}(r) &=&  \frac{1}{\Gamma(k_{C})\lambda_{C}^{k_{C}}} r^{k_{C}-1} \exp \left(-\frac{r}{\lambda_{C}}\right), 
    \end{eqnarray} 
\end{subequations} 
for $r\geq 0$.
\end{theorem}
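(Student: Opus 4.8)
The plan is to obtain the Gamma parameters by the very same two-moment matching used for Theorem~\ref{Gaussian_app}, the only change being the target family. A Gamma law is the natural choice here because both $Y=\alpha_f+X$ and $\Lambda=\alpha_u Y$ are nonnegative (sums and products of Nakagami-$m$ envelopes), whereas the Gaussian fit of Theorem~\ref{Gaussian_app} must be artificially truncated to $[0,\infty)$. First I would reuse the first two moments already established for the Gaussian approximation: by independence of $\alpha_f$ and the per-element products constituting $X$, the mean and variance of $Y$ add, giving $\mu_Y=\mu_{\alpha_f}+\mu_X$ and $\sigma^2_Y=\sigma^2_{\alpha_f}+\sigma^2_X$, with $\mu_X,\sigma^2_X$ taken from \eqref{eqn::mean_var} and the underlying Nakagami moments \eqref{nakagami_mean_variance}. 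For $\Lambda=\alpha_u Y$ with $\alpha_u$ independent of $Y$, the product-moment rule yields $\mu_\Lambda=\mu_{\alpha_u}\mu_Y$ and $\sigma^2_\Lambda=\mu^{(2)}_{\alpha_u}\mu^{(2)}_Y-\mu^2_Y\mu^2_{\alpha_u}$, where the needed $\mu^{(2)}_Y=\sigma^2_Y+\mu^2_Y$ follows immediately.

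The core step is then the matching itself, which is purely algebraic. A variable $C\sim\Gamma(k_C,\lambda_C)$ with shape $k_C$ and scale $\lambda_C$ has mean $k_C\lambda_C$ and variance $k_C\lambda_C^2$. Imposing $k_C\lambda_C=\mu_C$ and $k_C\lambda_C^2=\sigma^2_C$ and solving the pair simultaneously gives $\lambda_C=\sigma^2_C/\mu_C$ (dividing the variance by the mean) and then $k_C=\mu_C/\lambda_C=\mu_C^2/\sigma^2_C$, for each $C\in\{Y,\Lambda\}$, which is exactly the claimed parametrization. Substituting these parameters into the canonical Gamma density and its distribution function---the latter expressed through the lower incomplete gamma function $\gamma(\cdot,\cdot)$ introduced in the notation section---produces \eqref{gamma_appro} directly and closes the argument.

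The main obstacle is evidentiary rather than computational: as the text already notes, the exact PDFs of $Y$ and $\Lambda$ are intractable, so there is no closed form against which the fit can be certified. Matching two moments secures agreement of the mean and variance but says nothing about the tail behaviour, which is precisely what governs the outage-probability and BER analyses built on this theorem. I would therefore restrict the analytic content to the well-posedness checks---$\mu_C,\sigma^2_C>0$ guarantee positive, finite $k_C,\lambda_C$---and defer the actual accuracy of the approximation, especially in the low-probability regime, to the Monte~Carlo comparison promised in the text. This same tail-fidelity consideration is what ultimately justifies preferring the Gamma form over the Gaussian of Theorem~\ref{Gaussian_app}.
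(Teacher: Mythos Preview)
Your proposal is correct and follows exactly the approach the paper intends: two-moment matching to the Gamma family, reusing the moments $\mu_Y,\sigma^2_Y,\mu_\Lambda,\sigma^2_\Lambda$ already computed for Theorem~\ref{Gaussian_app}, then solving $k_C\lambda_C=\mu_C$ and $k_C\lambda_C^2=\sigma^2_C$ to obtain $k_C=\mu_C^2/\sigma_C^2$ and $\lambda_C=\sigma_C^2/\mu_C$. The paper in fact gives no separate proof for this theorem at all---it simply states the result with citations to standard probability texts and defers the accuracy assessment to the numerical comparison in Fig.~\ref{fig:PDF}---so your write-up is more explicit than the paper's own treatment while being entirely in line with it.
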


Since $P_T^{\star} = P Y^2$ and $\gamma^{\star} = \bar{\gamma} \Lambda^2$, we have 
 \begin{subequations}
    \begin{eqnarray}\label{eqn::gammaStar}
        F_{D}(r) &=& F_{d}(\sqrt{r/\bar{d}} ),\quad \\
        f_{D}(r) &=& \frac{1}{(2\sqrt{\bar{d} r})}f_{d}(\sqrt{r/\bar{d}}).
    \end{eqnarray}
\end{subequations}  
Here, for $D=P_T^{\star}$, $d=Y$ and $\bar{d}=P$, and for $D=\gamma^{\star}$, $d=\Lambda$ and $\bar{d}=\bar{\gamma}$.

\begin{rem}\label{Rayleigh}
For Rayleigh fading channels, i.e., $m_a = 1 , a \in \mathcal{A}$,  $\mu_{\alpha_a} = \sqrt{\pi \Omega_a}/2$, $\mu_{\alpha_a}^{(2)} = \Omega_a$, and $\sigma^2_{\alpha_a} = \Omega_a (4- \pi)/4$ \eqref{nakagami_mean_variance}. Besides, for $\eta_n = \eta$, $ \mu_X = \pi/4 N \eta  \sqrt{\Omega_g \Omega_h}$,  $\sigma^2_X =  (1-(\pi/4)^2) N \eta^2  \Omega_g \Omega_h,$
and $\mu_{\Lambda}$ and $ \sigma^2_{\Lambda}$ can be found accordingly as 
\begin{subequations}
    \begin{eqnarray}\label{eqn::mean_Ray1}
        \mu_{\Lambda} &=& \frac{\pi}{4 }\sqrt{ \Omega_u\Omega_f} + \frac{\pi^{3/2}}{8 } N \eta \sqrt{\Omega_g \Omega_h \Omega_u},\quad \\
        \sigma^2_{\Lambda} &=& \Omega_u \sigma^2_Y +  \frac{(4- \pi)}{4}\Omega_u \mu_Y^2,
    \end{eqnarray}
\end{subequations}
where $ \mu_Y = \sqrt{\pi \Omega_f}/2 + \mu_X$ and $\sigma^2_Y =  \Omega_f (4- \pi)/4 + \sigma^2_X $.
 \end{rem}

\begin{figure}[!tbp]\vspace{-0mm}
  \centering
  \fontsize{14}{14}\selectfont 
    \resizebox{.52\totalheight}{!}{\input{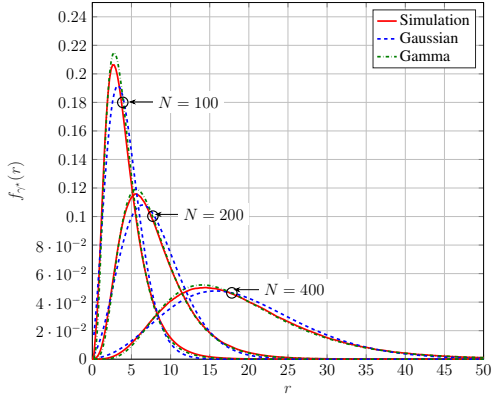}}\vspace{-0mm}
	\caption{The PDF of $\gamma^{\star}$ for $N=\{100,200,400\}$, $P=\qty{10}{\dB m}$, $d_f=\qty{10}{\m}$, $d_u=\qty{5}{\m}$, $d_h=\qty{5}{\m}$, $d_g=\qty{6}{\m}$, $m_f=3$, $m_u=5$, $m_h=3$, $m_g=4$, $\beta=0.6$, and $\eta_n=0.8,\, \forall n$.  }
	\label{fig:PDF} \vspace{-0mm}
\end{figure}

\begin{figure}[!tbp]\vspace{-0mm}
  \centering
  \fontsize{14}{14}\selectfont 
    \resizebox{.495\totalheight}{!}{\input{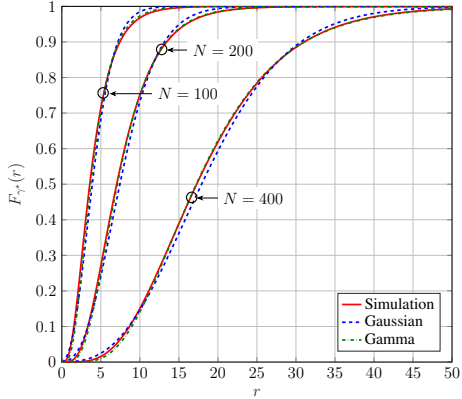}}\vspace{-0mm}
	\caption{The CDF of EO at \T{} as a function of $d_f$ for $P=\qty{20}{\dB m}$,  $d_u=\qty{5}{\m}$, $d_h=\qty{1}{\m}$, $d_g= \sqrt{d_h^2+d_f^2}$, $m_f=m_u=m_h=m_g=3$, $\beta=0.6$, $\phi=0.8$, and $\eta_n=0.8,\, \forall n$. }
	\label{fig:EO_Distance}  \vspace{-0mm}
\end{figure}


To  verify  these  approximations, we plot the PDF of  $\gamma^{\star}$ in  Fig. \ref{fig:PDF}  for different RIS sizes,   $N$. The Gamma approximation demonstrates better agreement with the exact simulation compared to the Gaussian approximation. As a result, we exclusively utilize it  due to its superior accuracy.


\begin{rem}\label{EO}
Fig. \ref{fig:EO_Distance} plots the CDF of EO at \T{} for the cases with and without a RIS. The CDF is evaluated as
\begin{eqnarray}\label{eqn:HP_CDF}
    F_{\rm{EO}}(P_b)= {\rm{Pr}} \left( (1-\beta) P_T^{\star} \leq P_b' \right).
\end{eqnarray}
When there is no RIS to power up the tag, the range is just $\sim \qty{6}{\m}$, and beyond \qty{6}{\m}, the tag goes to an EO. However, a RIS can lower the EO and achieve higher ranges than without a RIS case. For example, a RIS with \num{100} and \num{400} reflecting elements extends the tag activation range to \qty{13}{\m} and \qty{34}{\m} metres, respectively. Therefore, a RIS can prevent the risk of EO at the tag. 
 \end{rem}

\subsection{Average Achievable Rate }\label{Acheivable_rate}
The achievable rate  describes the bit rate  a network  can reliably transmit under a long-term, stable fading regime.   This measure provides valuable insights into the use of specific coding schemes.  In the context of \bbc, it may help to identify feasible applications.  The average achievable rate is the mean Shannon capacity given by   
 \begin{eqnarray}\label{eqn::average_rate}
\mathcal{R} = \mathbb{E} \{\log_2 \left(1+ \gamma^{\star} \right) \}.
\end{eqnarray}
The exact derivation of the average achievable rate given in \eqref{eqn::average_rate} seems intractable. However, by invoking Jensen’s inequality,  the tight upper and lower bounds are given as \cite{Zhang2014}
\begin{eqnarray}\label{eqn::upper_lower}
    \mathcal{R}_{\rm{lb}}  \le  \mathcal{R}\le \mathcal{R}_{\rm{ub}},
\end{eqnarray}
where $\mathcal{R}_{\rm{lb}}$ and $\mathcal{R}_{\rm{ub}}$ are respectively defined as
\begin{subequations}
   \begin{eqnarray}\label{eqn::upperLower}
        \mathcal{R}_{\rm{lb}} &=& \log_2 \left(1+ \left [\mathbb{E} \left \{ {1}/{\gamma^{\star}} \right  \} \right ]^{-1} \right), \quad \\
        \mathcal{R}_{\rm{ub}} &=&\log_2 \left(1+ \mathbb{E} \left \{ {\gamma^{\star}} \right \} \right).
    \end{eqnarray}
\end{subequations}
where $\mathbb{E} \left \{ {\gamma^{\star}} \right \} = \bar{\gamma} \mu^{(2)}_{\Lambda}$, and 
\begin{eqnarray}\label{eqn::expectation}
    \mathbb{E} \left \{ {1}/{\gamma^{\star}} \right  \} = \frac{1}{\mathbb{E} \left \{ {\gamma^{\star}} \right \}} + \frac{\sigma^2_{{\gamma}^{\star}}}{\left [ \mathbb{E} \left \{ {\gamma^{\star}} \right \} \right ]^3},
\end{eqnarray}
in which $\sigma^2_{{\gamma}^{\star}} = \bar{\gamma}^2 \sigma^2_{{\Lambda}^2}$. To calculate $\sigma^2_{{\gamma}^{\star}}$, we need to calculate  $\sigma^2_{{\Lambda}^2} = \mu^{(4)}_{\Lambda} - [\mu^{(2)}_{\Lambda}]^2 $ of $\Lambda$. Specifically, when $\Lambda$  is approximated with Gamma distribution, $ {\Lambda} \sim \Gamma(k_{\Lambda}, \lambda_{\Lambda})$, \cite{papoulis2002probability}
 \begin{eqnarray}\label{eqn::r4gamma}
  \mu^{(m)}_{\Lambda} = k_{\Lambda}^m \frac{\Gamma(m+k_{\Lambda})}{\Gamma(k_{\Lambda})}.
\end{eqnarray}

\begin{rem}
 {To gain further insights, we let the number of RIS elements grow without a bound. We observe that the transmit power can be scaled inversely proportional to the square of the number of IRS elements in this operating regime, i.e., $\lim_{N\rightarrow \infty} P = P_{A}/N^2$. This is the well-known squared power gain provided by the RIS \cite{Emil2020PowerScaling}. Hence, the lower and upper rate bounds can be shown to approach an asymptotic limit using this transmit power scaling law as   (Appendix \ref{asym_rate})
\begin{eqnarray}
    \lim_{N \rightarrow \infty}  \mathcal{R}_{\rm{lb}}= \mathcal{R}_{\infty}, \quad \text{and} \quad \lim_{N \rightarrow \infty}  \mathcal{R}_{\rm{ub}}= \mathcal{R}_{\infty},
\end{eqnarray}
where $\mathcal{R}_{\infty}$ is given in \eqref{eqn_asym_rate} 
\begin{figure*}
\begin{eqnarray}\label{eqn_asym_rate}
    \mathcal{R}_{\infty} = \log_2 \left(1+ \bar{\gamma}_A (\sigma_{\alpha_u}^2+ \mu_{\alpha_u}^2)\left( \eta \frac{\Gamma(m_g+1/2) \Gamma(m_h+1/2)}{\Gamma(m_g) \Gamma(m_h)} \sqrt{\frac{\Omega_g \Omega_h}{m_g m_h}}\right)^2 \right)
\end{eqnarray}
\hrulefill
\end{figure*}
with $\bar{\gamma}_A= \beta P_A/\sigma_z^2$ and $\eta = \eta_n, \forall n$.}
\end{rem}

\subsection{Outage Probability}\label{outage}
The EO at \T{} and rate/SNR outage at \R{} both affect the ability to deliver information reliably. An EO occurs when the received signal power falls below \T's activation threshold ($P_b$). In contrast, the SNR outage probability is the probability that the instantaneous SNR falls below a threshold ($\gamma_{\rm{th}}$). Mathematically,   the outage is expressed as  
\begin{eqnarray}\label{eqn::outage}
    P_{\rm{out}} = P_{\rm{out}}^{\rm{EO}} + \left(1-P_{\rm{out}}^{\rm{EO}}\right) P_{\rm{out}}^{\rm{SNR}},
\end{eqnarray}
where $P_{\rm{out}}^{\rm{EO}} = {\rm{Pr}} \{ (1-\beta) P_T^{\star} \leq P_b \}$ and $P_{\rm{out}}^{\rm{SNR}} =\rm{Pr} \{\gamma \le  \gamma_{\rm{th}} \}$.
A tight approximation to the outage  \eqref{eqn::outage} is obtained as
\begin{eqnarray}\label{eqn::outage2}
    P_{\rm{out}} \approx F_{P_T^{\star}}( \gamma_{\rm{th}}) + \left(1-F_{P_T^{\star}}( \gamma_{\rm{th}}) \right) F_{\gamma^{\star}}( \gamma_{\rm{th}}),
\end{eqnarray}
where $F_{\gamma^{\star}}(r)$ and $F_{P_T^{\star}}(r)$ are given in (\ref{eqn::gammaStar}).

\subsection{Average Bit/Symbol Error Rate}

The average BER  is  the ratio between the  number of bits received in error and the number of bits transmitted.   This  measure  is widely used to assess the reliability and quality of wireless links. In this section, we derive the BER of  the system for  BPSK modulation, a simple yet  robust modulation scheme. The  conditional error probability of BPSK is given as $ {P_{\rm{BER}}}\left( \gamma \right) = \lambda {Q}( {\sqrt {\nu \gamma} })$,  with $\lambda=1$ and $\nu=2$ \cite{Proakis2007}. The average BER of the system with  BPSK, using Gamma distribution \eqref{gamma_appro}, is thus  given as (Appendix \ref{BER_average})

\begin{eqnarray}\label{eqn::BER_gamma}
        \nonumber \bar{P}_{\rm{BER}}^{\rm{Gamma}} &=& A_1  (2\bar{A})^{-\frac{k_{\Lambda}}{2}} \Gamma(k_{\Lambda}) \exp{\left(\frac{\hat{B}_1^2}{8\bar{A}} \right)}  D_{-k_{\Lambda}}\left(\frac{\hat{B}_1^2}{\sqrt{2\bar{A}}} \right),
    \end{eqnarray}
 where $\bar{A} = A \nu \bar{\gamma}, \bar{B} = B \sqrt{\nu \bar{\gamma}} $, $A_1 = \exp{(-C)}/(\Gamma(k_{\Lambda}) \lambda_{\Lambda}^k)$, $\hat{B}_1 = \bar{B}+1/\lambda_{\Lambda}$, in which $A= 0.3842, B = 0.7640,$ and $C= 0.6964$. Note  that  the BER analysis above can be extended for higher-order modulation schemes such as QPSK and $M$-QAM \cite{Yunxia2005, Tellambura1996}. For brevity, such analyses are omitted here.  

\subsection{Achievable Diversity Order}\label{diversity}
The diversity order reveals how the BER  or SNR outage probability decays with the  SNR. It is defined as the negative slope of those measures  in the high SNR regime as follows: 
\begin{eqnarray}
    G_d= -\lim_{\bar{\gamma} \rightarrow \infty} \frac{\log(P_{\rm{out}}^{\rm{SNR}})}{\log(\bar{\gamma})} = -\lim_{\bar{\gamma} \rightarrow \infty} \frac{\log(\bar{P}_{\rm{BER}})}{\log(\bar{\gamma})}.
\end{eqnarray}
We next  derive   the  outage probability and   BER in the high SNR domain. 
We  use the  first-order polynomial expansion of the PDF approximations.  

\begin{lem}
The asymptotic  Outage Probability and  BER are respectively given as
 \begin{eqnarray}\label{outage_asym}
         P_{\rm{out}}^{\infty} = \frac{1}{\Gamma(k_{\Lambda}+1)\lambda_{\Lambda}^{k_{\Lambda}}}  \left( \frac{\gamma_{\rm{th}}}{\bar{\gamma}}\right)^{{k_{\Lambda}}/{2}} + \mathcal{O}\left(\bar{\gamma}^{-({k_{\Lambda}}/{2}+1)}\right),
    \end{eqnarray}
and 
 \begin{eqnarray}\label{high_snr_ber}
        \bar{P}_{\rm{BER}}^{\infty} &=&  ({\lambda_E} \bar{\gamma})^{-k_{\Lambda}/2} +\mathcal{O}\left(\bar{\gamma}^{-(k_{\Lambda}/2+1)} \right),
    \end{eqnarray} 
    where $\lambda_E =\nu \left[\left(2^{k_{\Lambda}/2}/3+(3/2)^{k_{\Lambda}/2} \right) C_1 \right]^{2/k_{\Lambda}}$ and $C_1={\lambda} \Gamma(k_{\Lambda}/2)/(8\Gamma(k_{\Lambda}) \lambda_{\Lambda}^{k_{\Lambda}} )$.

\end{lem}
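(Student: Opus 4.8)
The plan is to obtain both asymptotics from the near-origin behaviour of the Gamma approximation of $\Lambda$, exploiting that as $\bar\gamma\to\infty$ the event $\{\gamma^\star\le\gamma_{\rm{th}}\}$ forces $\Lambda$ into a shrinking neighbourhood of $0$, and that the high-SNR error probability is likewise governed by the mass of $f_{\gamma^\star}$ near $r=0$. Concretely, I would first extract the leading monomial of $F_\Lambda$ and $f_\Lambda$ about the origin from \eqref{gamma_appro}, then transfer these expansions to $\gamma^\star=\bar\gamma\Lambda^2$ through the change of variables \eqref{eqn::gammaStar}, and finally read off the SNR exponent $k_\Lambda/2$ as the diversity order.

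For the outage I would start from $P_{\rm{out}}^{\rm{SNR}}=F_{\gamma^\star}(\gamma_{\rm{th}})=F_\Lambda(\sqrt{\gamma_{\rm{th}}/\bar\gamma})$ via \eqref{eqn::gammaStar}. Inserting the Maclaurin series of the lower incomplete gamma function, $\gamma(k_\Lambda,x)=x^{k_\Lambda}/k_\Lambda-x^{k_\Lambda+1}/(k_\Lambda+1)+\cdots$, into \eqref{gamma_appro} gives $F_\Lambda(r)=r^{k_\Lambda}/(\Gamma(k_\Lambda+1)\lambda_\Lambda^{k_\Lambda})+\mathcal{O}(r^{k_\Lambda+1})$, where I use $\Gamma(k_\Lambda+1)=k_\Lambda\Gamma(k_\Lambda)$. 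Substituting $r=\sqrt{\gamma_{\rm{th}}/\bar\gamma}$ and retaining the leading term reproduces \eqref{outage_asym}, with the remainder inherited from the next term of the series.

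For the BER I would write $\bar{P}_{\rm{BER}}=\lambda\int_0^\infty Q(\sqrt{\nu r})\,f_{\gamma^\star}(r)\,dr$ and replace the exact $Q$-function by the two-term exponential approximation $Q(x)\approx\tfrac{1}{12}e^{-x^2/2}+\tfrac14 e^{-2x^2/3}$, which renders each contribution an elementary Gamma integral. The near-origin expansion of the density, $f_{\gamma^\star}(r)=\tfrac{1}{2\sqrt{\bar\gamma r}}f_\Lambda(\sqrt{r/\bar\gamma})\approx r^{k_\Lambda/2-1}/(2\Gamma(k_\Lambda)\lambda_\Lambda^{k_\Lambda}\bar\gamma^{k_\Lambda/2})$, then reduces the problem to $\int_0^\infty r^{k_\Lambda/2-1}e^{-cr}\,dr=\Gamma(k_\Lambda/2)\,c^{-k_\Lambda/2}$, evaluated at $c=\nu/2$ and $c=2\nu/3$. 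Collecting the two contributions factors out $C_1=\lambda\Gamma(k_\Lambda/2)/(8\Gamma(k_\Lambda)\lambda_\Lambda^{k_\Lambda})$ and the bracket $2^{k_\Lambda/2}/3+(3/2)^{k_\Lambda/2}$, which upon repackaging as a single power of $\bar\gamma^{-k_\Lambda/2}$ yields \eqref{high_snr_ber} and identifies $\lambda_E$.

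The main obstacle is justifying that substituting only the leading monomials of $F_\Lambda$ and $f_\Lambda$ captures the dominant high-SNR behaviour with a controlled remainder, i.e. that the region where the Maclaurin expansions cease to be valid contributes at strictly higher order in $1/\bar\gamma$. For the outage this is immediate, since $\sqrt{\gamma_{\rm{th}}/\bar\gamma}\to0$ and the expansion is evaluated exactly at a vanishing argument. For the BER it is the familiar single-integral diversity argument: the exponential weights from the $Q$-approximation are integrable against the full density, while the subleading term of $f_\Lambda$ carries an extra factor of $\sqrt{r/\bar\gamma}$, so its integral is suppressed relative to the leading term. The step requiring the most care is verifying this suppression and confirming that the $Q$-function approximation perturbs only the constant and not the SNR exponent $k_\Lambda/2$.
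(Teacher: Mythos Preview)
Your proposal is correct and follows essentially the same approach as the paper: both use the near-origin monomial of the Gamma-approximated density (the paper writes it for $f_\Lambda$, you push it through \eqref{eqn::gammaStar} to $f_{\gamma^\star}$, which is the same thing up to the substitution $r=\bar\gamma x^2$) and the same two-term exponential approximation $Q(x)\approx\tfrac{1}{12}e^{-x^2/2}+\tfrac14 e^{-2x^2/3}$ to reduce the BER integral to Gamma integrals. The only cosmetic difference is that the paper integrates in the $\Lambda$-domain while you integrate in the $\gamma^\star$-domain, and for the outage you expand the incomplete gamma function in the CDF directly rather than first expanding the PDF and integrating; both routes yield the identical leading term and remainder order.
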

\begin{proof}
    See Appendix \ref{Asymptotic_lem}.
\end{proof}
Using \eqref{outage_asym}, the diversity order $G_d$ and the coding gain, $O_c$, are respectively given as $G_d=k_{\Lambda}/2$, and  $O_c={1}/({\Gamma(k_{\Lambda}+1)\lambda_{\Lambda}^{k_{\Lambda}}})$ \cite{Zhengdao2003}. Besides, from \eqref{high_snr_ber} the array gain becomes  $G_a=\lambda_E$.

\begin{rem}
According to Section \ref{Statistical_Characterization}, ${k_{\Lambda}} =  {\mu^2_{\Lambda}}/{\sigma^2_{\Lambda}}$ and $\lambda_{\Lambda} = \sigma^2_{\Lambda}/\mu_{\Lambda}$. Thus, the diversity order $G_d$  can be  given in \eqref{Eqn:diversity_inf}, 
\begin{figure*}
\begin{eqnarray}\label{Eqn:diversity_inf}
    G_d = \frac{1}{2}\frac{ \Upsilon_{u}^2 \zeta_{u} \left( \Upsilon_{f} \zeta_{f}^{1/2}+  N \hat{\Upsilon}  \zeta_g^{1/2} \zeta_h^{1/2}\right)^2}{\zeta_u \left(\bar{\Upsilon}_u +\Upsilon_u^2\right)  \left(\bar{\Upsilon}_f \zeta_f +  N {\Upsilon}' \zeta_g \zeta_h \right)  + \bar{\Upsilon}_u \zeta_u  \left( \Upsilon_f \zeta_f^{1/2}+  N \hat{\Upsilon}  \zeta_g^{1/2} \zeta_h^{1/2} \right)^2},
\end{eqnarray}
\hrulefill
\end{figure*}
where we assume that $\eta_n = \eta, n \in \mathcal{N}$. In \eqref{Eqn:diversity_inf}, we define $\mu_{\alpha_a}\triangleq \Upsilon_a \zeta_a^{1/2}$,  and $\sigma^2_{\alpha_a} \triangleq \bar{\Upsilon}_a \zeta_a$, where $\Upsilon_a= {\Gamma(m_a+{1}/{2})}/{\Gamma(m_a)}$ and $\bar{\Upsilon}_a= m_a- \Upsilon_a^2$ \eqref{nakagami_mean_variance}. Besides, for $\eta_n = \eta, n \in \mathcal{N}$,  $\mu_X = N \hat{\Upsilon}  \zeta_g^{1/2} \zeta_h^{1/2}$, $\sigma^2_{X} = N {\Upsilon}' \zeta_g \zeta_h$, where $\hat{\Upsilon} \triangleq \eta \Upsilon_g \bar{\Upsilon}_h$ and ${\Upsilon}' \triangleq \eta^2 \bar{\Upsilon}_h \bar{\Upsilon}_g $.

To gain further insight, we let the number of RIS elements increase to a large limit, i.e., $ N \rightarrow \infty$. After several  mathematical manipulations,  we have\footnote{Note that $G_d$ is derived using the Gamma approximation and letting $N$ grow large. Hence, $G_d$  may not be exact. However, since the exact derivation is  mathematically intractable, this approximation provides valuable insights.}
\begin{eqnarray}\label{Eqn:diversity_final}
    \lim_{N \rightarrow \infty} G_d = \frac{1}{2} \frac{ \Upsilon_{u}^2  }{ \bar{\Upsilon}_u} =\frac{1}{2} \left(m_u \left(\frac{\Gamma(m_u)}{\Gamma(m_u+\frac{1}{2})}\right)^2 - 1\right)^{-1}. \quad
\end{eqnarray} 
From \eqref{Eqn:diversity_final}, unexpectedly, we  observe that the diversity order of  the proposed RIS-assisted system (Fig. \ref{fig:system_model})  is constant in the high regime of $N$ and solely depends on the \T-\R{}  channel parameters. This behavior is because \T{} acts as a keyhole/pinhole \cite{Diluka2022}. Hence, regardless of the number of RIS elements or antennas of \E{}, the rank of combined channels \E-\T{} and \E-RIS-\T{} becomes one.  Therefore, this combined channel does not contribute to the diversity order because of  keyhole channel properties \cite{Boyer2013}.   

The coding gain and  array gain are similarly derived  as
\begin{eqnarray}
    \lim_{N \rightarrow \infty} O_c &=& \frac{1}{\Gamma \left(\frac{ \Upsilon_{u}^2}{ \bar{\Upsilon}_u}+1\right)} 
    \left ({N \hat{\Upsilon} \sqrt{\zeta_u \zeta_g  \zeta_h} \bar{\Upsilon}_u {\Upsilon}_u^{-1}}\right) ^{- \frac{ \Upsilon_{u}^2  }{ \bar{\Upsilon}_u}},  \label{eqn:codin_gain} \\
    \lim_{N \rightarrow \infty} G_a &=& \lambda_{\bar{E}} {\left ({N \hat{\Upsilon} \sqrt{\zeta_u \zeta_g  \zeta_h} \bar{\Upsilon}_u {\Upsilon}_u^{-1}}\right)}^{-2},\label{eqn:array_gain}
\end{eqnarray}
where $\lambda_{\bar{E}} = \nu \left[\left(2^{{ \Upsilon_{u}^2}/{ (2\bar{\Upsilon}_u)}}/3+(3/2)^{{ \Upsilon_{u}^2}/{( 2\bar{\Upsilon}_u)}} \right) C_{\bar{1}} \right]^{ \frac{2 \Upsilon_{u}^2  }{ \bar{\Upsilon}_u} } $, and  $C_{\bar{1}} = \lambda \Gamma({ \Upsilon_{u}^2}/{ (2\bar{\Upsilon}_u)})/(8\Gamma({ \Upsilon_{u}^2}/{\bar{\Upsilon}_u}))$.

From \eqref{eqn:codin_gain} and \eqref{eqn:array_gain},  while the diversity order is independent of the number of RIS elements, the coding gain and the array gain are both functions of it - see    Fig. \ref{fig:outage_m_N_WO} and Fig. \ref{fig:BER_m_N_WO}.

\end{rem}

\subsection{Effect of RIS Phase Quantization Errors}\label{phase_error}
Due  to  hardware limitations and  imperfect CSI, adopting continuous RIS  phase shifts over $[-\pi ,\pi]$ might be infeasible. Thus, phase shifts would be quantized  to a set of  discrete values  \cite{Wu2020}, i.e.,  each RIS element uses only a  finite number of phase shifts.  Accordingly,  the discrete phase of  $n$-th RIS element is  given as $\hat{\theta}^{\star} = \kappa \pi/2^D$, where  $D$ represents the number of
quantization bits, $\kappa = \min_{\varrho \in \{0, \pm1, \ldots, \pm 2^{D-1}  \}} \vert{\theta}^{\star} - \varrho \pi/2^D\vert$, and ${\theta}^{\star}$ is the optimal phase shift, given in \eqref{eqn::phase_adjust}. Thus, the quantization error is  $\epsilon_n = {\theta}^{\star}_n-\hat{\theta}^{\star}_n, n \in \mathcal{N},$ which can be shown to be uniformly distributed for large quantization levels, i.e., $\epsilon_n \sim \mathcal{U}[-\tau,\tau)$ and $\tau = \pi/2^D$ \cite{Haykin2009,Galappaththige2022}. Therefore, the optimal received SNR $\gamma^{\star}$ given in \eqref{eqn::optimal_SNR} with the discrete phase shifts may be expressed  as
\begin{eqnarray}\label{eqn::optimal_SNR_phase}
     \gamma^{\star} &=& \bar{\gamma} \left \vert \alpha_u  \left(\alpha_f + \sum\nolimits_{n \in \mathcal{N}}{\eta_n \alpha_{g_n} \alpha_{h_n}  e^{j\epsilon_n}} \right)\right \vert^2 \nonumber\\
     &=& \bar{\gamma} \alpha_u^2 ((\alpha_f+X_R)^2 + X_I^2),
\end{eqnarray}
where $X_{R} \triangleq \sum_{n \in \mathcal{N}}{\eta_n \alpha_{g_n} \alpha_{h_n} \cos(\epsilon_n)}$ and $X_I \triangleq \sum_{n \in \mathcal{N}}{\eta_n \alpha_{g_n} \alpha_{h_n} \sin(\epsilon_n)}$. 

We  investigate the effect of phase quantization errors at the RIS on the average achievable rate. According to Section \ref{Acheivable_rate},  we can  derive an upper bound for the average achievable rate, given in (\ref{eqn::upperLower}), with discrete phase shifts as follows:
\begin{eqnarray}\label{eqn::upper_phase}
      \bar{\mathcal{R}}_{\rm{ub}} \!= \log_2  \!\left(1 \!+\! \bar{\gamma}  \mu_{\alpha_u}^{(2)} \left( \mu_{\alpha_f}^{(2)} \!+\!  \mu_{X_R}^{(2)} \!+\! 2 \mu_{\alpha_f}\mu_{X_R} \!+\!  \mu_{X_I}^{(2)} \right) \right),
\end{eqnarray}
where  $\mu_{X_R}^{(2)} = \sigma^2_{X_R} + \mu_{X_R}^{2}$, $\mu_{X_I}^{(2)} = \sigma^2_{X_I} + \mu_{X_I}^{2}$, and 
\begin{subequations}
    \begin{eqnarray}\label{eqn::xi}
        \mu_{X_R} &=& \sum \nolimits_{n \in \mathcal{N}}{\eta_n \mu_{\alpha_{g_n}} \mu_{\alpha_{h_n}}} \frac{\sin(\tau)}{\tau},\quad \\
        \sigma^2_{X_R}&=& \sum \nolimits_{n \in \mathcal{N}}{ \eta_n^2 \mu_{\alpha_{g_n}}^{(2)}  \mu_{\alpha_{h_n}}^{(2)} \left(\frac{1}{2}+ \frac{\sin\left( {2\tau}\right)}{{4\tau}}\right)} -  \mu_{X_R}^2, \qquad\\
        \sigma^2_{X_I} &=& \sum \nolimits_{n \in \mathcal{N}}{\eta_n^2  \mu_{\alpha_{g_n}}^{(2)}  \mu_{\alpha_{h_n}}^{(2)}  \left(\frac{1}{2}- \frac{\sin( {2\tau})}{{4\tau}}\right)},
    \end{eqnarray} 
\end{subequations}
where   $\mu_{X_I} = 0$, and $\mu_{\alpha_a}^{(m)}, a \in \mathcal{A}$  is given in (\ref{nakagami_mean_variance}).

\begin{rem}
If a RIS is placed in the \T-\R{}  link of a \bbc setup, we can investigate the network's performance following a similar approach for the setup with RIS in the \E-\T{} link. However, when both the \E-\T{} and \T-\R{}  links use RIS, the Gaussian and Gamma approximations may not match well with the exact PDF and CDF of the received SNR. We can still derive the upper and lower bounds of the average achievable rate using a similar method given in Section \ref{Acheivable_rate}.  
\end{rem} 

\section{Multi-Tag System}\label{sec:multi_tag}
This section  develops  the analysis of  multiple ($K>1$) single-antenna tags in Fig. \ref{fig:system_model}. Here, by optimizing the RIS phase shifts, we simultaneously maximize the received signal power at \Tk{} and the achievable sum rate at \R{}. To this end, the problem is formulated as follows:
\begin{subequations} \label{eqn:P1}
	\begin{eqnarray}
		\mathcal{P}_{M,1}: \underset{\mathbf{\Theta} }{\text{maximize}} && \sum \limits_{k \in \mathcal{K}}{{\rm{log}}_2(1+\gamma_k)}, \label{eqn:P1_objective}\\
		\text{subject to}  
		&&  (1-\beta) P_{T,k} \geq P_b', \label{eqn:EH_P1}\\
		&& |\bar{\theta}_{n}| \leq 1, \label{eqn:theta_P1}
	\end{eqnarray}	 
\end{subequations} 
where $P_{T,k}$ is given in \eqref{Received_power_k}, and $\gamma_k$ is the signal-to-interference-plus-noise ratio (SINR)  of \T{}$_k$ at  \R{}. Using \eqref{eqn::backscatter}, $\gamma_k$  is obtained as
\begin{eqnarray}\label{eqn:SINR_Tk}
	\gamma_k = 	\frac{\beta P\vert u_k (f_k + \mathbf{g}_k^{\rm{T}} \mathbf{\Theta} \mathbf{h}) \vert^2}{\sum \nolimits_{i \in \mathcal{K}/k }{\beta P\vert u_i (f_i + \mathbf{g}_i^{\rm{T}} \mathbf{\Theta} \mathbf{h}) \vert^2}+ \sigma^2_z}.
	\end{eqnarray}	 
In \eqref{eqn:P1}, the constraint \eqref{eqn:EH_P1} guarantees the minimum power  for \T{}$_k$. 

Because of the  non-convex objective function and the constraint,   $\mathcal{P}_{M,1}$ is not amenable to popular  convex algorithms.  Therefore, we develop a solution based  on fractional programming.  

\subsection{Proposed Solution}
First, we define $\mathbf{a}_k\triangleq{\rm{diag}}(u_k\mathbf{g}_k^{\rm{T}}) \mathbf{h}$ and  $ b_k \triangleq u_k f_k$. Thereby, the SINR  \eqref{eqn:SINR_Tk} is rearranged as 
\begin{eqnarray}\label{eqn:SINR_Tk_theta}
	\gamma_k = 	\frac{\beta P \vert b_k + \boldsymbol{\theta}^{\rm{H}} \mathbf{a}_k \vert^2}{\sum \nolimits_{i \in \mathcal{K}/k }{\beta P\vert  b_i + \boldsymbol{\theta}^{\rm{H}} \mathbf{a}_i\vert^2}+ \sigma^2_z},
\end{eqnarray}	
where $\boldsymbol{\theta}= [\bar{\theta}_1,\ldots,\bar{\theta}_N]^{\rm{T}}$. Then, $\mathcal{P}_{M,1}$ can be treated as a fractional programming problem \cite{Shen2018, Galappaththige2023}. We next apply a quadratic transform to the objective function of $\mathcal{P}_{M,1}$ as
\begin{eqnarray}\label{eqn:obj_theta}
	\!\!\!f(\boldsymbol{\theta},\lambda) \!&=&\!  \sum\limits_{k \in \mathcal{K}}\! {\rm{log}_2} \! \left(\! 1 \!+\! 2 \lambda_k \sqrt{\beta P} {\rm{Re}}\left\{b_{k}\!+\!\boldsymbol{\theta}^{\rm{H}} \mathbf{a}_{k}\right\} \right. \nonumber\\
  &&\left. \!-\!\lambda_k^2 \!\left( \!\beta P \!\sum\limits_{i\in \mathcal{K}/k} \!\vert b_{i}+ \boldsymbol{\theta}^{\rm{H}} \mathbf{a}_{i} \vert^2 \!+\! \sigma_{z}^2  \right)\right)\!,
\end{eqnarray}
where $\boldsymbol{\lambda} = [\lambda_1, \ldots, \lambda_K]^{\rm{T}}$ is auxiliary variables introduced by the quadratic transformation. Thereby, we alternatively optimize $\boldsymbol{\theta}$ and $\boldsymbol{\lambda}$. For a given $\boldsymbol{\theta}$, the optimal $\lambda_k$ is found in closed-form as \cite{Shen2018, Galappaththige2023}
\begin{eqnarray}\label{eqn:opt_lambda}
	\lambda_k^{\star} = \frac{\sqrt{\beta P} {\rm{Re}} \left\{ b_{k}+ \boldsymbol{\theta}^{\rm{H}} \mathbf{a}_{k}\right\}}{ \ln(2)\left( \beta P \sum\nolimits_{i \in \mathcal{K}/k} \vert b_{i}+ \boldsymbol{\theta}^{\rm{H}} \mathbf{a}_{i} \vert^2 + \sigma_{z}^2\right)}.
\end{eqnarray}

\begin{rem}\label{rem2}
 {Without losing generality, we constrain the phase-shift vector,  $\boldsymbol{\theta}$, with the channel responses to obtain a non-negative real desired signal term, i.e., $\left|b_{k}+ \boldsymbol{\theta}^{\rm{H}} \mathbf{a}_{k}\right| \approx {\rm{Re}}\{b_{k}+ \boldsymbol{\theta}^{\rm{H}} \mathbf{a}_{k}\}$. Our simulation results also support the validity of this approach. This is due to the fact that our method iteratively maximizes the achievable rate/SINR by co-phasing the desired signal component while reducing interference.}
\end{rem}

Next, we must  optimize $\boldsymbol{\theta}$ for a given $\boldsymbol{\lambda}$. First, by expanding $\vert b_k+ \boldsymbol{\theta} \mathbf{a}_k \vert^2$ and then applying several mathematical manipulations, the objective function in  \eqref{eqn:obj_theta} can be rearranged as 
\begin{eqnarray}\label{eqn:obj_theta1}
	 f(\boldsymbol{\theta}) =  \sum\nolimits_{k\in \mathcal{K}} {\rm{log}}_2 \left( 1 - \boldsymbol{\theta}^{\rm{H}} \mathbf{U}_k \boldsymbol{\theta} + 2{\rm{Re}}\left\{ \boldsymbol{\theta}^{\rm{H}} \mathbf{v}_k \right\} + c_k \right),\quad
\end{eqnarray}
where $\mathbf{U}_k$, $\mathbf{v}_k$, and $c_k$ are defined as
\begin{subequations} 
\begin{eqnarray} \label{eqn:def}
	\mathbf{U}_k &\triangleq& 2(\lambda_k^{\star})^2 \beta P \sum\nolimits_{i \in \mathcal{K}/k} \mathbf{a}_{i} \mathbf{a}_{i}^{\rm{H}}, \quad \\
	\mathbf{v}_k &\triangleq& \lambda_k^{\star} \sqrt{\beta P} \mathbf{a}_{k} -(\lambda_k^{\star})^2 \beta P  \sum\nolimits_{i \in \mathcal{K}/k} b_{i}^* \mathbf{a}_{i}, \nonumber \\
	c_k &\triangleq& \!2 \lambda_k^{\star} \sqrt{\beta P} {\rm{Re}}\{b_{k}\} \!-\! (\lambda_k^{\star})^2 \! \left(\! \beta P\! \sum\nolimits_{i \in \mathcal{K}/k} \!\vert b_{i} \vert^2 \!+\! \sigma_{z}^2 \! \right)\!. \qquad
\end{eqnarray}
\end{subequations}
Next, the corresponding optimization problem is  given as
\begin{subequations} \label{eqn:P2}
	\begin{eqnarray}
		\mathcal{P}_{M,2}: \underset{\boldsymbol{\theta}}{\text{maximize}} && f(\boldsymbol{\theta}), \label{eqn:P2_objective}\\
		\text{subject to}  
		&&   (1-\beta) P_{{\rm{T}},k}^{\rm{Lin}}  \geq P_b', \label{eqn:EH_P2}\\
		&& |\bar{\theta}_{n}| \leq 1, \label{eqn:theta_P2}
	\end{eqnarray}	 
\end{subequations}
where $P_{{\rm{T}},k}^{\rm{Lin}}$ is the linearized received signal power at \Tk, given as
\begin{eqnarray}\label{eqn:PT_lin}
	 P_{\rm{T},k}^{\rm{Lin}} &=&  \boldsymbol{\theta}_{j-1}^{\rm{H}} \bar{\mathbf{U}}_k \boldsymbol{\theta}_{j-1} + 2{\rm{Re}}\left\{ \boldsymbol{\theta}_{j-1}^{\rm{H}} \bar{\mathbf{v}}_k \right\} + \bar{c}_k \nonumber\\
	 && \!\!+ \left(\left(\bar{\mathbf{U}}_k+ \bar{\mathbf{U}}_k^{\rm{H}} \right) \boldsymbol{\theta}_{j-1} +2{\rm{Re}}\left\{ \bar{\mathbf{v}}_k \right\}  \right)^{\rm{H}} \left(\boldsymbol{\theta}_{j}- \boldsymbol{\theta}_{j-1} \right), \quad \,\,
\end{eqnarray}
where  $\bar{\mathbf{a}}_k \triangleq {\rm{diag}}(\mathbf{g}_k^{\rm{T}}) \mathbf{h}$,  $\bar{b}_k\triangleq f_k$, 
$\bar{\mathbf{U}}_k \triangleq P \bar{\mathbf{a}}_k \bar{\mathbf{a}}_k^{\rm{H}}$, $\bar{\mathbf{v}}_k\triangleq P \bar{b}_k^* \bar{\mathbf{a}}_k$, and $\bar{c}_k\triangleq P \vert \bar{b}_k \vert^2$. Besides, $\boldsymbol{\theta}_{j}$ and $\boldsymbol{\theta}_{j-1}$ are the current and the previous iteration values of $\boldsymbol{\theta}$.

Because  $\mathbf{a}_{k} \mathbf{a}_{k}^{\rm{H}}$ is a positive-definite matrix, $\mathbf{U}_k$ is also a positive-definite matrix. Hence, the objective function, $f(\boldsymbol{\theta})$, is a quadratic concave function of $\boldsymbol{\theta}$. Thus, $\mathcal{P}_{2}$ can be solved as a quadratically constrained quadratic program (QCQP) \cite{BoydConvex2004}, see  Algorithm~\ref{Algo1}. 

{\linespread{1.0}
\begin{algorithm}[hbt!]
\caption{: Algorithm for phase shift optimization.}\label{Algo1}
\begin{algorithmic}
 \renewcommand{\algorithmicrequire}{\textbf{Initialization:}}
 \renewcommand{\algorithmicensure}{\textbf{Repeat}}
 \Require Initialize $\boldsymbol{\theta}$ to a feasible value.
\Ensure
\State \textbf{Step 1}: Update $\boldsymbol{\lambda}$ by \eqref{eqn:opt_lambda}.
\State \textbf{Step 2}: Update $\boldsymbol{\theta}$ by solving $\mathcal{P}_{2}$ in \eqref{eqn:P2}.
\end{algorithmic}
\textbf{Until} the value of the objective function converges.\\
\textbf{Output:} The optimal phase shift matrix $\mathbf{\Theta}^o$. 
\end{algorithm}}
\vspace{-5mm}

\begin{rem}
 {The proposed optimization approach for solving $\mathbf{ \Theta}$ is presented in Algorithm \ref{Algo1} after the original problem, $\mathcal{P}_{M,1}$, is transformed into a convex problem. $\mathcal{P}_{M,1}$ is solved iteratively using an alternate optimization technique. First, we calculate the SINR in \eqref{eqn:SINR_Tk_theta} after we initiate $\mathbf{ \Theta}$ to a feasible value,  and then we update a better solution for $\mathbf{ \Theta}$ in each iteration. This process is repeated until the normalized objective function increases by less than $\epsilon=10^{-4}$.}
\end{rem}

\begin{rem}
     {Because the tags are passive devices with limited power, $\mathcal{P}_{M,1}$ considers fixed reflection coefficients, $\beta_k$'s, to keep the tags' power, cost, and form factor to a minimum \cite{ Huawei_ambient}. $\beta_k$'s, however, can be optimized at the expense of a simple tag architecture.  We can easily utilize alternative optimization because $\boldsymbol{\Theta}$ and $\beta_k$'s are independent variables. Thereby, $\boldsymbol{\Theta}$ and $\beta_k$'s are alternately optimized till convergence. For the sake of brevity, we omit details.}
\end{rem}

\subsection{Computational Complexity}
 {The proposed algorithm is an alternating optimization solution with  iterative multiple stages.   The outer loop optimizes  $\mathbf{\Theta}$. This sub-problem requires  iterative updates. The  computational complexity of  Algorithm \ref{Algo1} is centered in step 2. As  CVX (MATLAB) uses the SDPT3 solver for this,  the computational complexity of Algorithm \ref{Algo1} is $\mathcal{O}(N^3)$ \cite{Ben2001book}. Thus, the total complexity is   $\mathcal{O}( I_{\theta} N^3)$, where $I_{\theta}$ is the number of iterations of Algorithm \ref{Algo1}.
}

\section{Simulation Results}\label{sim}

{\linespread{1.0}
\begin{table} \vspace{-3mm}
\caption{Simulation settings.} \vspace{-2mm}\label{table_simulation_param}
\centering 
\begin{tabular}{c c c c} 
\hline\hline 
Parameter & Value & Parameter & Value \\ [0.5ex] 
\hline 
$f_c$ & \qty{3}{\GHz} & $\eta_n, \forall n$ & $0.8$ \\ 
$B$ & \qty{10}{\MHz} & $d_f,d_u$ & $10,5$ m \\
$N_f$ &\qty{10}{\dB} &  $d_g,d_h$ & $3,8$ m  \\
$m_a, a \in \mathcal{A}$ & $3$  &   $\beta, \phi$ & $0.6, 0.8$ \\ [0.5ex]
\hline 
\end{tabular}
\label{table:nonlin} \vspace{-5mm}
\end{table} }


We adopt the 3GPP UMi model to model the large-scale fading $\zeta_{a}$ for $a \in \mathcal{A}$ with $f_c =  \qty{3}{\GHz}$ operating frequency \cite[Table B.1.2.1]{3GPP2010}.
The AWGN variance, $\sigma_{z}^2$, is modeled as $\sigma_{z}^2=10\log_{10}(N_0 B N_f)$ dBm, where $N_0=\qty{-174}{\dB m/\Hz}$, $B$ is the bandwidth, and $N_f$ is the noise figure. Unless otherwise specified, Table \ref{table_simulation_param} gives the simulation parameters. To ensure statistical consistency, we generate the simulation curves by averaging over $10^5$ iterations except for outage and BER, where we use $10^8$ iterations.  {As a benchmark, we also consider random RIS phase-shifts, i.e., $\theta_n \in \mathcal{U}[-\pi,\pi]$.}
\subsection{Single-Tag Scenario}
\subsubsection{\textbf{RIS Location}}

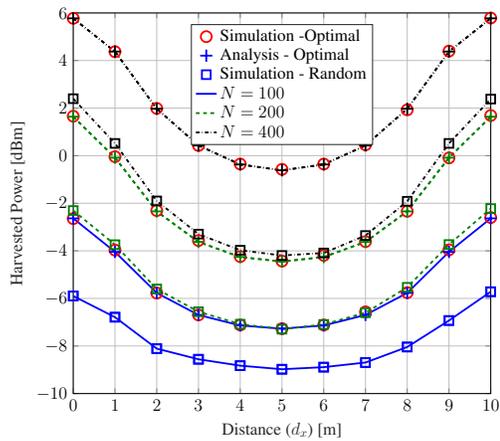
\begin{figure}[!tbp]\vspace{-0mm}
  \centering
   \fontsize{14}{14}\selectfont 
    \resizebox{.52\totalheight}{!}{
%
%
\begin{tikzpicture}

\begin{axis}[%
width=4.755in,
height=4.338in,
at={(0.798in,0.586in)},
scale only axis,
xmin=0,
xmax=10,
xlabel style={font=\color{white!15!black}},
xlabel={Distance ($d_x$) [m]},
ymin=-10,
ymax=6,
ylabel style={font=\color{white!15!black}},
ylabel={Harvested Power [dBm]},
axis background/.style={fill=white},
xmajorgrids,
ymajorgrids,
legend style={at={(0.5,0.97)}, anchor=north, legend cell align=left, align=left, draw=white!15!black}
]
\addplot [color=red, line width=1.5pt, only marks, mark size=4.5pt, mark=o, mark options={solid, red}]
  table[row sep=crcr]{%
0	-2.64666258545071\\
1	-3.96751572352878\\
2	-5.77495119462372\\
3	-6.69415545200388\\
4	-7.11654339778204\\
5	-7.27006829032825\\
6	-7.11733483793466\\
7	-6.56563671143556\\
8	-5.75832564636687\\
9	-3.95759683832739\\
10	-2.60056623981308\\
};
\addlegendentry{Simulation -Optimal}

\addplot [color=blue, line width=1.5pt, only marks, mark size=5.0pt, mark=+, mark options={solid, blue}]
  table[row sep=crcr]{%
0	-2.63494526707638\\
1	-4.04425458962504\\
2	-5.76934488058953\\
3	-6.69495377713065\\
4	-7.1383724056804\\
5	-7.27063000493951\\
6	-7.1383724056804\\
7	-6.69495377713065\\
8	-5.76934488058953\\
9	-4.04425458962504\\
10	-2.63494526707638\\
};
\addlegendentry{Analysis - Optimal}

\addplot [color=blue, line width=1.5pt, mark size=3.6pt, mark=square, mark options={solid, blue}, forget plot]
  table[row sep=crcr]{%
0	-5.89855422396522\\
1	-6.78945817029006\\
2	-8.10999150008058\\
3	-8.55770887919107\\
4	-8.8270427598581\\
5	-8.97735534285755\\
6	-8.88995859961648\\
7	-8.69467754203026\\
8	-8.04255138902295\\
9	-6.92884223408737\\
10	-5.71795218794489\\
};
\addplot [color=blue, line width=1.5pt, only marks, mark size=3.6pt, mark=square, mark options={solid, blue}]
  table[row sep=crcr]{%
0	-5.89855422396522\\
1	-6.78945817029006\\
2	-8.10999150008058\\
3	-8.55770887919107\\
4	-8.8270427598581\\
5	-8.97735534285755\\
6	-8.88995859961648\\
7	-8.69467754203026\\
8	-8.04255138902295\\
9	-6.92884223408737\\
10	-5.71795218794489\\
};
\addlegendentry{Simulation - Random}

\addplot [color=blue, line width=1.5pt]
  table[row sep=crcr]{%
0	-2.63494526707638\\
1	-4.04425458962504\\
2	-5.76934488058953\\
3	-6.69495377713065\\
4	-7.1383724056804\\
5	-7.27063000493951\\
6	-7.1383724056804\\
7	-6.69495377713065\\
8	-5.76934488058953\\
9	-4.04425458962504\\
10	-2.63494526707638\\
};
\addlegendentry{$N=100$}

\addplot [color=red, line width=1.5pt, only marks, mark size=4.5pt, mark=o, mark options={solid, red}, forget plot]
  table[row sep=crcr]{%
0	1.65939325503842\\
1	-0.0415857597679974\\
2	-2.30604749100752\\
3	-3.56647594569563\\
4	-4.24258695364944\\
5	-4.43172113369184\\
6	-4.19005603201066\\
7	-3.61317058820066\\
8	-2.33586733656512\\
9	-0.0928250354838838\\
10	1.68979944170962\\
};
\addplot [color=black!50!green, dashed, line width=1.5pt]
  table[row sep=crcr]{%
0	1.63862497466891\\
1	-0.0835321719952162\\
2	-2.34118033955355\\
3	-3.62423478261383\\
4	-4.25927414186975\\
5	-4.45148294787236\\
6	-4.25927414186975\\
7	-3.62423478261383\\
8	-2.34118033955355\\
9	-0.0835321719952162\\
10	1.63862497466891\\
};
\addlegendentry{$N=200$}

\addplot [color=black!50!green, dashed, line width=1.5pt, mark size=5.0pt, mark=+, mark options={solid, black!50!green}, forget plot]
  table[row sep=crcr]{%
0	1.63862497466891\\
1	-0.0835321719952162\\
2	-2.34118033955355\\
3	-3.62423478261383\\
4	-4.25927414186975\\
5	-4.45148294787236\\
6	-4.25927414186975\\
7	-3.62423478261383\\
8	-2.34118033955355\\
9	-0.0835321719952162\\
10	1.63862497466891\\
};
\addplot [color=black!50!green, dashed, line width=1.5pt, mark size=3.6pt, mark=square, mark options={solid, black!50!green}, forget plot]
  table[row sep=crcr]{%
0	-2.30516096937487\\
1	-3.74134032744422\\
2	-5.60325320638231\\
3	-6.55965900774606\\
4	-7.08298254206193\\
5	-7.29430888860139\\
6	-7.0935469979766\\
7	-6.60284512002301\\
8	-5.53337602796597\\
9	-3.73442337096327\\
10	-2.222665430142\\
};
\addplot [color=red, line width=1.5pt, only marks, mark size=4.5pt, mark=o, mark options={solid, red}, forget plot]
  table[row sep=crcr]{%
0	5.77499349382322\\
1	4.37042969801718\\
2	1.9883845356789\\
3	0.427839382608624\\
4	-0.350306767077484\\
5	-0.573981003486267\\
6	-0.350687900644743\\
7	0.456393083580053\\
8	1.92774009471088\\
9	4.39716853190862\\
10	5.78802443820001\\
};
\addplot [color=black, dashdotted, line width=1.5pt]
  table[row sep=crcr]{%
0	5.77466257867343\\
1	4.37784206239081\\
2	1.97244111683676\\
3	0.42945593376453\\
4	-0.369286841274231\\
5	-0.615357453612035\\
6	-0.369286841274231\\
7	0.42945593376453\\
8	1.97244111683676\\
9	4.37784206239081\\
10	5.77466257867343\\
};
\addlegendentry{$N=400$}

\addplot [color=black, dashdotted, line width=1.5pt, mark size=5.0pt, mark=+, mark options={solid, black}, forget plot]
  table[row sep=crcr]{%
0	5.77466257867343\\
1	4.37784206239081\\
2	1.97244111683676\\
3	0.42945593376453\\
4	-0.369286841274231\\
5	-0.615357453612035\\
6	-0.369286841274231\\
7	0.42945593376453\\
8	1.97244111683676\\
9	4.37784206239081\\
10	5.77466257867343\\
};
\addplot [color=black, dashdotted, line width=1.5pt, mark size=3.6pt, mark=square, mark options={solid, black}, forget plot]
  table[row sep=crcr]{%
0	2.39682865515761\\
1	0.51560041807619\\
2	-1.90166260181263\\
3	-3.2969853622339\\
4	-3.97507546199307\\
5	-4.19507584742848\\
6	-4.09014332076606\\
7	-3.35620797777284\\
8	-1.91652125791194\\
9	0.516940064078028\\
10	2.38062460115162\\
};
\end{axis}
\end{tikzpicture}
	\caption{The harvested power for $P=\qty{20}{\dB m}$, $d_f=\qty{10}{\m}$, $d_u=\qty{5}{\m}$, $d_h=\sqrt{d_y^2+d_x^2}$, $d_g=\sqrt{d_y^2+(d_f-d_x)^2}$, and $d_y=\qty{1}{\m}$.} 
	\label{fig:EH_Distance}
\end{figure}

\begin{figure}[!tbp]\vspace{-0mm}
  \centering
     \fontsize{14}{14}\selectfont 
    \resizebox{.52\totalheight}{!}{\input{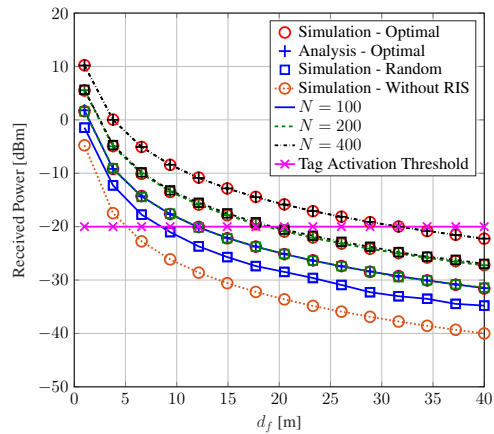}}\vspace{-0mm}
	\caption{The average received signal power at the tag as a function of \E-\T{} distance ($d_f$) for $P=\qty{20}{\dB m}$,  $d_h=\qty{1}{\m}$, and $d_g=\sqrt{d_h^2+(d_f)^2}$.}
	\label{fig:Rx_Power}   \vspace{-0mm}
\end{figure}

 {Fig. \ref{fig:EH_Distance} explores the optimal RIS location using both optimal and random phase-shift designs. The harvested power at \T{}, which is computed using \eqref{eqn:avg_opt_power}, is plotted as a function of the RIS placement in the power-up link.  Monte-Carlo simulation is performed to validate the accuracy of the harvested power analysis. The results demonstrate that the optimal placement of the RIS should be near either \E{} or \T, while placing the RIS in the middle of the \E-\T{} link leads to the lowest harvested power and, consequently, the poorest rate performance. } {Optimal RIS phase-shifts result in a substantial improvement in harvested power compared to the random phase-shift design.}

\subsubsection{\textbf{Passive Tag Activation Range}}

In Fig. \ref{fig:Rx_Power}, we present the received signal power at \T{} as a function of the \E-\T{} distance ($d_f$) for different RIS sizes. We include a comparison case without RIS. To provide valuable insights, we depict the activation threshold of \T{} (typically around \qty{-20}{\dB m} for commercial passive RFID tags \cite{tags}). Without a RIS, the range of \T{} is limited to less than \qty{6}{\m} as the received power at the tag is insufficient for activation beyond this distance. However, the presence of an RIS between \E{} and \T{} significantly increases this range.  {
For RIS sizes  \num{100} and \num{400}   and  optimized phase shifts,  \T's activation distance increases  \qty{13}{\m} and \qty{34}{\m}, respectively. However, with random phase shifts, it  reduces to \qty{9}{\m} and \qty{20}{\m} respectively. These gains  highlight the effectiveness of our proposed design.}

\subsubsection{\textbf{Performance Evaluation}}

Fig. \ref{fig:rateGain} depicts the average rate gain offered by the RIS against the non-RIS scenario as a function of the number of RIS elements $N$ for different \E's transmit power $P$. The achievable rate gain is calculated as $\mathcal{R}_{\rm{ub}}^{\rm{RIS}}-\mathcal{R}_{\rm{ub}}^{\rm{non-RIS}}$, where $\mathcal{R}_{\rm{ub}}^{\rm{RIS}}$ is the achievable rate upper bound with a RIS in the forward link and $\mathcal{R}_{\rm{ub}}^{\rm{non-RIS}}$ is the achievable rate upper bound of non-RIS setup.  As observed,  deploying a RIS in the forward link (\E-\T{} link) can considerably increase the achieved rate of \T. Moreover, with  increasing $N$, \T{} can achieve  a higher rate gain  due to the increased received power at \T, which eventually increases the received SNR. This suggests that a large RIS is more beneficial.  Specifically, at $P=\qty{10}{\dB m}$, a RIS with $200$ and $300$ elements provides respectively a gain of \qty{1.5}{bps/\Hz} and \qty{2.1}{bps/\Hz}.

 Fig. \ref{fig:outage_m_N_WO} depicts the outage probability as a function of transmit power $P$ for various numbers of RIS elements. It includes the analytical outage curves obtained using the closed-form expression \eqref{eqn::outage2}, as well as the Monte-Carlo simulated curves and  the Gaussian approximation for comparison. The outage probability of \E-\T-\R{} transmission in a non-RIS setup is provided as a reference. The figure demonstrates that the Gaussian and Gamma approximations closely align with the exact outage curves, particularly for low-to-moderate transmit power levels of the emitter. {As expected, the proposed  RIS-aided system significantly improves the outage performance compared to the non-RIS counterpart. Moreover, with optimal  RIS phase-shifts,  the outage performance is further enhanced, providing  a minimum  power gain of \qty{5}{\dB m} and \qty{14}{\dB m} when compared to the random phase shift design and the non-RIS case.}

 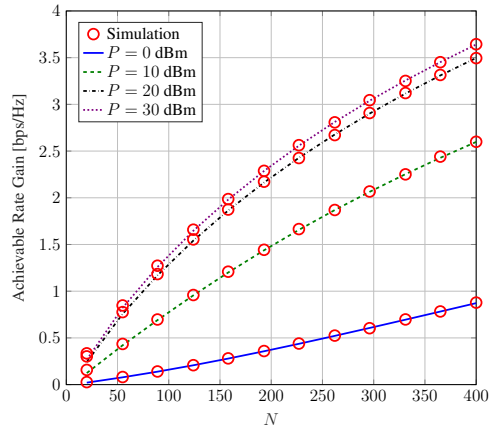
\begin{figure}[!tbp]\vspace{-0mm}
  \centering
  \fontsize{14}{14}\selectfont 
    \resizebox{.52\totalheight}{!}{
%
%
\begin{tikzpicture}

\begin{axis}[%
width=4.755in,
height=4.338in,
at={(0.798in,0.586in)},
scale only axis,
xmin=0,
xmax=400,
xlabel style={font=\color{white!15!black}},
xlabel={$N$},
ymin=0,
ymax=4,
ylabel style={font=\color{white!15!black}},
ylabel={Achievable Rate Gain [bps/Hz]},
axis background/.style={fill=white},
xmajorgrids,
ymajorgrids,
legend style={at={(0.03,0.97)}, anchor=north west, legend cell align=left, align=left, draw=white!15!black}
]
\addplot [color=red, line width=1.5pt, only marks, mark size=4.5pt, mark=o, mark options={solid, red}]
  table[row sep=crcr]{%
20	0.0268085077879503\\
55	0.0813348486301953\\
89	0.140729126843916\\
124	0.207925499477186\\
158	0.280885818745439\\
193	0.359414679000969\\
227	0.439618735482735\\
262	0.524755659775505\\
296	0.603836911357215\\
331	0.697716511210283\\
365	0.783084906102086\\
400	0.877981621105573\\
};
\addlegendentry{Simulation}

\addplot [color=blue, line width=1.5pt]
  table[row sep=crcr]{%
20	0.0211145808733339\\
55	0.0781728738290006\\
89	0.138641231016887\\
124	0.207048852940126\\
158	0.278771428677591\\
193	0.357144751482054\\
227	0.436859023765859\\
262	0.521802204233802\\
296	0.606426073876229\\
331	0.695071621244958\\
365	0.782141518923645\\
400	0.872284414622355\\
};
\addlegendentry{$P=0$ dBm}

\addplot [color=red, line width=1.5pt, only marks, mark size=4.5pt, mark=o, mark options={solid, red}, forget plot]
  table[row sep=crcr]{%
20	0.157512918069941\\
55	0.435313371397048\\
89	0.697057988387062\\
124	0.959154823940482\\
158	1.20896052885879\\
193	1.44314192467616\\
227	1.66504727920601\\
262	1.86931360238631\\
296	2.06747892124009\\
331	2.25074726977731\\
365	2.4407102933762\\
400	2.59960834356706\\
};
\addplot [color=black!50!green, dashed, line width=1.5pt]
  table[row sep=crcr]{%
20	0.123973320682116\\
55	0.421036927611873\\
89	0.690566616159895\\
124	0.955516466598979\\
158	1.19998675389504\\
193	1.43807264389398\\
227	1.65647981638582\\
262	1.86873218533927\\
296	2.06350874301936\\
331	2.25315414163037\\
365	2.42767003042559\\
400	2.59814557728865\\
};
\addlegendentry{$P=10$ dBm}

\addplot [color=red, line width=1.5pt, only marks, mark size=4.5pt, mark=o, mark options={solid, red}, forget plot]
  table[row sep=crcr]{%
20	0.302194292337886\\
55	0.776321088800521\\
89	1.18318119511881\\
124	1.55445036700014\\
158	1.8740113833863\\
193	2.17220839560566\\
227	2.42534985093474\\
262	2.67123236268728\\
296	2.9074164895582\\
331	3.12119650826321\\
365	3.3147775829859\\
400	3.49498308403942\\
};
\addplot [color=black, dashdotted, line width=1.5pt]
  table[row sep=crcr]{%
20	0.241984042589865\\
55	0.756759203435671\\
89	1.16954867626629\\
124	1.54179340830878\\
158	1.8640782922762\\
193	2.16326145169659\\
227	2.42776300124703\\
262	2.67749684599217\\
296	2.90145746228889\\
331	3.11551512565878\\
365	3.30953096633795\\
400	3.49669726114001\\
};
\addlegendentry{$P=20$ dBm}

\addplot [color=red, line width=1.5pt, only marks, mark size=4.5pt, mark=o, mark options={solid, red}, forget plot]
  table[row sep=crcr]{%
20	0.33469849765673\\
55	0.847431954725187\\
89	1.27283201746824\\
124	1.65848303630955\\
158	1.98610336413467\\
193	2.28832934834358\\
227	2.56320814472597\\
262	2.80758130807103\\
296	3.04475096221133\\
331	3.25088525457633\\
365	3.45147984316515\\
400	3.64392999594102\\
};
\addplot [color=violet, dotted, line width=1.5pt]
  table[row sep=crcr]{%
20	0.267481095268037\\
55	0.823201197090498\\
89	1.25923018339751\\
124	1.64721570799654\\
158	1.98010862223783\\
193	2.28717827076314\\
227	2.55738561761999\\
262	2.81161068617376\\
296	3.03897903809765\\
331	3.255828768708\\
365	3.4520373625927\\
400	3.64105452465305\\
};
\addlegendentry{$P=30$ dBm}

\end{axis}

\begin{axis}[%
width=6.135in,
height=5.323in,
at={(0in,0in)},
scale only axis,
xmin=0,
xmax=1,
ymin=0,
ymax=1,
axis line style={draw=none},
ticks=none,
axis x line*=bottom,
axis y line*=left
]
\end{axis}
\end{tikzpicture}
	\caption{The  rate gain compared to the non-RIS case for   $d_f=\qty{10}{\m}$, $d_u=\qty{5}{\m}$, $d_h=\qty{8}{\m}$,  and $d_g=\qty{3}{\m}$. }
	\label{fig:rateGain} 
\end{figure}

 \begin{figure}[!tbp]\vspace{-0mm}
  \centering
  \fontsize{14}{14}\selectfont 
    \resizebox{.52\totalheight}{!}{\input{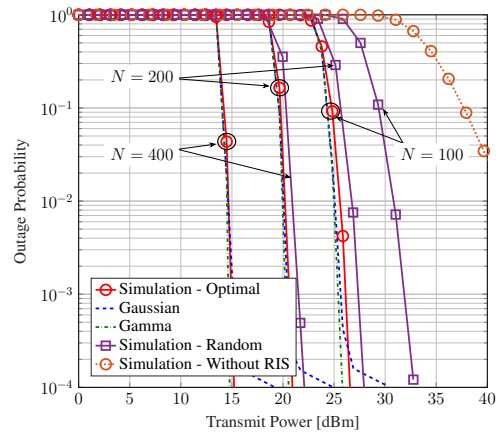}}\vspace{-0mm}
	\caption{The outage probability for $\gamma_{th}=\qty{0}{\dB}$,   $d_f=\qty{8}{\m}$, $d_u=\qty{4}{\m}$, $d_h=\qty{1}{\m}$, and $d_g= \qty{8}{\m}$.}
	\label{fig:outage_m_N_WO}  \vspace{-0mm}
\end{figure}

In Fig. \ref{fig:BER_m_N_WO}, we present the average BER of BPSK as a function of the transmit power $P$ for different numbers of RIS elements. The analytical BER curves, obtained using the closed-form expression \eqref{eqn::BER_gamma}, are plotted along with the analytical BER curves for the Gaussian approximation. Additionally, the exact BER curves are generated through Monte Carlo simulations. We include the BER performance of \E-\T-\R{} transmission in a non-RIS setup for comparison. The asymptotic behavior of the average BER in the high power (SNR) regime is also examined using \eqref{high_snr_ber}. From Fig. \ref{fig:BER_m_N_WO}, it can be observed that the analytical BER curves, under both approximations, closely match the exact BER curve for low-to-moderate transmit power of \E. For high transmit power of \E, the BER under Gaussian and Gamma approximations serve as upper and lower bounds, respectively, for the exact average BER of the RIS-assisted system. The asymptotic BER probability curves provide insights into the achievable diversity orders. 

Fig. \ref{fig:outage_m_N_WO} and Fig.\ref{fig:BER_m_N_WO} demonstrate that the proposed system outperforms the non-RIS setup in terms of the outage and BER performance given the same transmit power, or the former achieves the same outage and BER performance as the latter with much less transmit power. For instance, achieving a $10^{-2}$ BER requires \qty{19}{\dB m} in the non-RIS setup, whereas the same can be achieved with a RIS ($N=100$) at only \qty{5}{\dB m}. This amounts to a power saving of 14 dB. Furthermore, increasing the size of the RIS enhances the outage and BER performance. From a green perspective, using a RIS to energize the tags improves reliability, reduces the required transmit power for a specific BER, and extends the communication range.

\begin{figure}[!tbp]\vspace{-0mm}
  \centering
  \fontsize{14}{14}\selectfont 
    \resizebox{.52\totalheight}{!}{\input{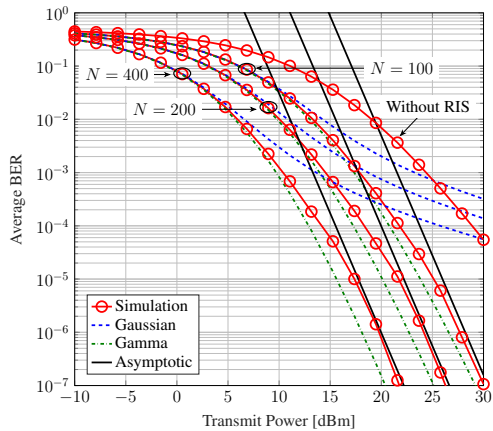}}\vspace{-0mm}
	\caption{BER of BPSK for  $d_f=\qty{10}{\m}$, $d_u=\qty{5}{\m}$, $d_h=\qty{8}{\m}$, and $d_g=\qty{3}{\m}$.}
	\label{fig:BER_m_N_WO}
    \vspace{-0mm}
\end{figure}

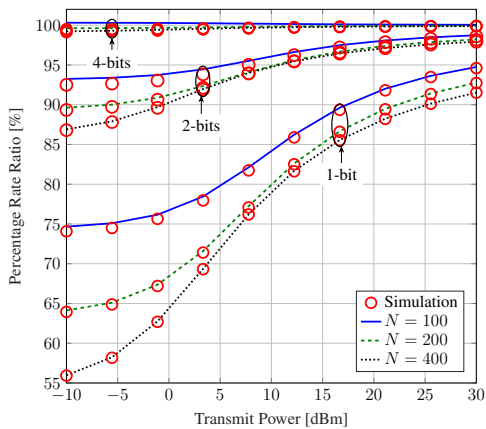
\begin{figure}[!tbp]\vspace{-0mm}
  \centering
  \fontsize{14}{14}\selectfont 
    \resizebox{.52\totalheight}{!}{
%
%
\begin{tikzpicture}

\begin{axis}[%
width=4.755in,
height=4.338in,
at={(0.798in,0.586in)},
scale only axis,
xmin=-10,
xmax=30,
xlabel style={font=\color{white!15!black}},
xlabel={Transmit Power [dBm]},
ymin=55,
ymax=102,
ylabel style={font=\color{white!15!black}},
ylabel={Percentage Rate Ratio [\%]},
axis background/.style={fill=white},
xmajorgrids,
ymajorgrids,
legend style={at={(0.97,0.03)}, anchor=south east, legend cell align=left, align=left, draw=white!15!black}
]
\addplot [color=red, line width=1.5pt, only marks, mark size=4.5pt, mark=o, mark options={solid, red}]
  table[row sep=crcr]{%
-10	74.0805605612476\\
-5.55555555555556	74.4983965160843\\
-1.11111111111111	75.6446926707136\\
3.33333333333333	77.9634905857617\\
7.77777777777778	81.7386474408319\\
12.2222222222222	85.8889998202919\\
16.6666666666667	89.3547835310171\\
21.1111111111111	91.8122939802997\\
25.5555555555556	93.4809579678101\\
30	94.606292616758\\
};
\addlegendentry{Simulation}

\addplot [color=blue, line width=1.5pt]
  table[row sep=crcr]{%
-10	74.6628902814158\\
-5.55555555555556	75.0925265781211\\
-1.11111111111111	76.1633884096839\\
3.33333333333333	78.4476682172574\\
7.77777777777778	82.1199537022599\\
12.2222222222222	86.2286445737525\\
16.6666666666667	89.6216783937547\\
21.1111111111111	92.0131311637314\\
25.5555555555556	93.6322592365259\\
30	94.7479148808725\\
};
\addlegendentry{$N=100$}

\addplot [color=red, line width=1.5pt, only marks, mark size=4.5pt, mark=o, mark options={solid, red}, forget plot]
  table[row sep=crcr]{%
-10	63.9389920978398\\
-5.55555555555556	64.8830372536138\\
-1.11111111111111	67.2016912051295\\
3.33333333333333	71.405826308414\\
7.77777777777778	77.0849396954194\\
12.2222222222222	82.5062099947191\\
16.6666666666667	86.5965261110541\\
21.1111111111111	89.4245568300465\\
25.5555555555556	91.3462676036809\\
30	92.7289104626744\\
};
\addplot [color=black!50!green, dashed, line width=1.5pt]
  table[row sep=crcr]{%
-10	64.1268977351101\\
-5.55555555555556	65.0987758220729\\
-1.11111111111111	67.3608431515628\\
3.33333333333333	71.5870509145388\\
7.77777777777778	77.2425800596208\\
12.2222222222222	82.6125740324204\\
16.6666666666667	86.6747123057439\\
21.1111111111111	89.4837517875578\\
25.5555555555556	91.4152009548135\\
30	92.7814189189027\\
};
\addlegendentry{$N=200$}

\addplot [color=red, line width=1.5pt, only marks, mark size=4.5pt, mark=o, mark options={solid, red}, forget plot]
  table[row sep=crcr]{%
-10	55.9279847609693\\
-5.55555555555556	58.1722395067608\\
-1.11111111111111	62.6916746128048\\
3.33333333333333	69.3137891467051\\
7.77777777777778	76.1997823957853\\
12.2222222222222	81.6243977597189\\
16.6666666666667	85.5409033176633\\
21.1111111111111	88.2244148658262\\
25.5555555555556	90.1275269142262\\
30	91.5206831717598\\
};
\addplot [color=black, dotted, line width=1.5pt]
  table[row sep=crcr]{%
-10	55.9687367131909\\
-5.55555555555556	58.2321386153158\\
-1.11111111111111	62.7437691943321\\
3.33333333333333	69.3580236570646\\
7.77777777777778	76.2058113630487\\
12.2222222222222	81.678538197547\\
16.6666666666667	85.5566380971183\\
21.1111111111111	88.2422827999524\\
25.5555555555556	90.1415115172317\\
30	91.5312807389963\\
};
\addlegendentry{$N=400$}

\addplot [color=red, line width=1.5pt, only marks, mark size=5.0pt, mark=o, mark options={solid, red}, forget plot]
  table[row sep=crcr]{%
-10	92.4992809587173\\
-5.55555555555556	92.6515536203751\\
-1.11111111111111	93.0422085117942\\
3.33333333333333	93.829263684109\\
7.77777777777778	95.020855644727\\
12.2222222222222	96.2364752602792\\
16.6666666666667	97.2061241770203\\
21.1111111111111	97.8655620504792\\
25.5555555555556	98.2988348173104\\
30	98.5982854887551\\
};
\addplot [color=blue, line width=1.5pt, forget plot]
  table[row sep=crcr]{%
-10	93.2469585653644\\
-5.55555555555556	93.3889422877842\\
-1.11111111111111	93.7361297392032\\
3.33333333333333	94.4469989602734\\
7.77777777777778	95.5160782154726\\
12.2222222222222	96.6255696015603\\
16.6666666666667	97.4901091599525\\
21.1111111111111	98.0795226438557\\
25.5555555555556	98.4722607436392\\
30	98.740966404778\\
};
\addplot [color=red, line width=1.5pt, only marks, mark size=5.0pt, mark=o, mark options={solid, red}, forget plot]
  table[row sep=crcr]{%
-10	89.3395550967221\\
-5.55555555555556	89.7337141852906\\
-1.11111111111111	90.628666199614\\
3.33333333333333	92.1297698369409\\
7.77777777777778	93.9355284297723\\
12.2222222222222	95.4846949471114\\
16.6666666666667	96.5821841675122\\
21.1111111111111	97.3149384129381\\
25.5555555555556	97.8127782083472\\
30	98.1612938521295\\
};
\addplot [color=black!50!green, dashed, line width=1.5pt, forget plot]
  table[row sep=crcr]{%
-10	89.6266808271291\\
-5.55555555555556	90.0122957924746\\
-1.11111111111111	90.8699484626255\\
3.33333333333333	92.342180629053\\
7.77777777777778	94.101795758503\\
12.2222222222222	95.6095353212546\\
16.6666666666667	96.6771041367509\\
21.1111111111111	97.3908234992274\\
25.5555555555556	97.8740793379226\\
30	98.2136476226404\\
};
\addplot [color=red, line width=1.5pt, only marks, mark size=5.0pt, mark=o, mark options={solid, red}, forget plot]
  table[row sep=crcr]{%
-10	86.8046973506417\\
-5.55555555555556	87.819287953056\\
-1.11111111111111	89.6102090640261\\
3.33333333333333	91.9048897081596\\
7.77777777777778	93.9467472946736\\
12.2222222222222	95.4322760441013\\
16.6666666666667	96.4278400084375\\
21.1111111111111	97.1020030160441\\
25.5555555555556	97.5717557019422\\
30	97.9150263306641\\
};
\addplot [color=black, dotted, line width=1.5pt, forget plot]
  table[row sep=crcr]{%
-10	86.9092480920165\\
-5.55555555555556	87.904494524579\\
-1.11111111111111	89.6959943648261\\
3.33333333333333	91.9637766370128\\
7.77777777777778	93.9956793098124\\
12.2222222222222	95.4661103859801\\
16.6666666666667	96.4546368552749\\
21.1111111111111	97.1227362837884\\
25.5555555555556	97.590221399904\\
30	97.9307774181762\\
};
\addplot [color=red, line width=1.5pt, only marks, mark size=4.5pt, mark=o, mark options={solid, red}, forget plot]
  table[row sep=crcr]{%
-10	99.5088940709305\\
-5.55555555555556	99.5198082973267\\
-1.11111111111111	99.5459798227451\\
3.33333333333333	99.6012724265714\\
7.77777777777778	99.6805422158827\\
12.2222222222222	99.7618506535298\\
16.6666666666667	99.8232831787232\\
21.1111111111111	99.8645146539476\\
25.5555555555556	99.8922091772151\\
30	99.9113477024092\\
};
\addplot [color=blue, line width=1.5pt, forget plot]
  table[row sep=crcr]{%
-10	100.313192271779\\
-5.55555555555556	100.306127651716\\
-1.11111111111111	100.288975505767\\
3.33333333333333	100.254372946401\\
7.77777777777778	100.203518529801\\
12.2222222222222	100.152023725116\\
16.6666666666667	100.112618378191\\
21.1111111111111	100.086022207631\\
25.5555555555556	100.068385005012\\
30	100.056343285975\\
};
\addplot [color=red, line width=1.5pt, only marks, mark size=4.5pt, mark=o, mark options={solid, red}, forget plot]
  table[row sep=crcr]{%
-10	99.2970276520782\\
-5.55555555555556	99.3264532215817\\
-1.11111111111111	99.3892535162273\\
3.33333333333333	99.4937879126603\\
7.77777777777778	99.6146382389112\\
12.2222222222222	99.7151460685\\
16.6666666666667	99.7854743163445\\
21.1111111111111	99.8317904151928\\
25.5555555555556	99.8629353660654\\
30	99.8847502734589\\
};
\addplot [color=black!50!green, dashed, line width=1.5pt, forget plot]
  table[row sep=crcr]{%
-10	99.61751431654\\
-5.55555555555556	99.6332057127116\\
-1.11111111111111	99.6675181907962\\
3.33333333333333	99.7246908945419\\
7.77777777777778	99.7905612039798\\
12.2222222222222	99.845288186599\\
16.6666666666667	99.8833182132748\\
21.1111111111111	99.9085076322215\\
25.5555555555556	99.9254919896557\\
30	99.9374048162722\\
};
\addplot [color=red, line width=1.5pt, only marks, mark size=4.5pt, mark=o, mark options={solid, red}, forget plot]
  table[row sep=crcr]{%
-10	99.128927734656\\
-5.55555555555556	99.2029746242627\\
-1.11111111111111	99.3316538776049\\
3.33333333333333	99.4873728449695\\
7.77777777777778	99.6209522317604\\
12.2222222222222	99.7153486939147\\
16.6666666666667	99.7777255624697\\
21.1111111111111	99.8197737277717\\
25.5555555555556	99.8489825774464\\
30	99.870526946338\\
};
\addplot [color=black, dotted, line width=1.5pt, forget plot]
  table[row sep=crcr]{%
-10	99.2369671439599\\
-5.55555555555556	99.3019828620568\\
-1.11111111111111	99.414839986068\\
3.33333333333333	99.5511103240532\\
7.77777777777778	99.6682076780637\\
12.2222222222222	99.7507389834168\\
16.6666666666667	99.8054851868142\\
21.1111111111111	99.8422615669619\\
25.5555555555556	99.8679272957469\\
30	99.8866036503291\\
};
\end{axis}

\begin{axis}[%
width=6.135in,
height=5.323in,
at={(0in,0in)},
scale only axis,
xmin=0,
xmax=1,
ymin=0,
ymax=1,
axis line style={draw=none},
ticks=none,
axis x line*=bottom,
axis y line*=left
]
\draw [black, line width=1.0pt] (axis cs:0.216024,0.88264) ellipse [x radius=0.0113902, y radius=0.020372];
\draw [black, line width=1.0pt] (axis cs:0.38785,0.767604) ellipse [x radius=0.0132399, y radius=0.033326];
\draw [black, line width=1.0pt] (axis cs:0.646917,0.672055) ellipse [x radius=0.0150763, y radius=0.0465013];
\draw[-{Stealth}, color=black] (axis cs:0.216,0.81) node[fill=white] {4-bits}-- (axis cs:0.216,0.861);
\draw[-{Stealth}, color=black] (axis cs:0.385,0.672) node[fill=white] {2-bits} -- (axis cs:0.385,0.736);
\draw[-{Stealth}, color=black] (axis cs:0.65,0.565) node[fill=white] {1-bit} -- (axis cs:0.65,0.626);
\end{axis}
\end{tikzpicture}
	\caption{The effect of phase quantization on the  rate for $D=\{1,2,4\}\qty{}{\bit}$, $d_f=\qty{10}{\m}$, $d_u=\qty{5}{\m}$, $d_h=\qty{8}{\m}$, and $d_g=\qty{3}{\m}$.}
	\label{fig:rate_phase} \vspace{-0mm}
\end{figure}

\subsubsection{\textbf{The effect phase discretization}}
Fig. \ref{fig:rate_phase} examines the impact of RIS phase quantization error on the average achievable rate for different numbers of quantization bits ($D$) and RIS elements ($N$). The plot shows the percentage rate ratio against the average transmit power, which is defined as $\hat{\mathcal{R}} = \bar{\mathcal{R}}_{\rm{ub}}/{\mathcal{R}}_{\rm{ub}} \times \% 100$, where $\bar{\mathcal{R}}_{\rm{ub}}$ is the average achievable rate upper bound with phase shift quantization errors given in \eqref{eqn::upper_phase} and ${\mathcal{R}}_{\rm{ub}}$ denotes the upper bound of the achievable rate with continuous phase shifts in (\ref{eqn::upper_lower}b).  The results are validated through Monte-Carlo simulations. As the number of quantization bits ($D$) increases, the effect of quantization error becomes negligible. For $D=4$, nearly 100\% of the rate with continuous phase shifts can be achieved, rendering the quantization error negligible. Additionally, as expected, the impact of phase quantization error intensifies with increasing  RIS size and for fixed  quantization level. 

\subsection{Multi-Tag Scenario}

\begin{figure}[!tbp]\vspace{-0mm}
  \centering
  \fontsize{14}{14}\selectfont 
    \resizebox{.52\totalheight}{!}{
%
%
\begin{tikzpicture}

\begin{axis}[%
width=4.755in,
height=4.338in,
at={(0.798in,0.586in)},
scale only axis,
xmin=0,
xmax=40,
xlabel style={font=\color{white!15!black}},
xlabel={Transmit Power [dBm]},
ymode=log,
ymin=0.08,
ymax=1,
yminorticks=true,
ylabel style={font=\color{white!15!black}},
ylabel={Outage Probability},
axis background/.style={fill=white},
xmajorgrids,
ymajorgrids,
yminorgrids,
legend style={at={(0.03,0.03)}, anchor=south west, legend cell align=left, align=left, draw=white!15!black}
]

\addplot [color=blue, line width=1.5pt, only marks, mark size=4.5pt, mark=o, mark options={solid, blue}]
  table[row sep=crcr]{%
0	1\\
3.63636363636364	1\\
7.27272727272727	1\\
10.9090909090909	1\\
14.5454545454545	1\\
18.1818181818182	0.965833333333333\\
21.8181818181818	0.8785\\
25.4545454545455	0.732\\
29.0909090909091	0.58\\
32.7272727272727	0.426\\
36.3636363636364	0.245833333333333\\
40	0.23\\
};
\addlegendentry{Optimal}

\addplot [color=blue, line width=1.5pt, only marks, mark size=3.6pt, mark=square, mark options={solid, blue}]
  table[row sep=crcr]{%
0	1\\
3.63636363636364	1\\
7.27272727272727	1\\
10.9090909090909	1\\
14.5454545454545	1\\
18.1818181818182	1\\
21.8181818181818	0.983\\
25.4545454545455	0.896\\
29.0909090909091	0.796\\
32.7272727272727	0.696\\
36.3636363636364	0.521666666666667\\
40	0.5\\
};
\addlegendentry{Random}

\addplot [color=blue, line width=1.5pt]
  table[row sep=crcr]{%
0	1\\
3.63636363636364	1\\
7.27272727272727	1\\
10.9090909090909	1\\
14.5454545454545	1\\
18.1818181818182	0.965833333333333\\
21.8181818181818	0.8785\\
25.4545454545455	0.732\\
29.0909090909091	0.58\\
32.7272727272727	0.426\\
36.3636363636364	0.245833333333333\\
40	0.23\\
};
\addlegendentry{$N=100$}

\addplot [color=black!50!green, dashed, line width=1.5pt]
  table[row sep=crcr]{%
0	1\\
3.63636363636364	1\\
7.27272727272727	1\\
10.9090909090909	1\\
14.5454545454545	1\\
18.1818181818182	0.9148\\
21.8181818181818	0.7556\\
25.4545454545455	0.59398\\
29.0909090909091	0.43048\\
32.7272727272727	0.26548\\
36.3636363636364	0.183133333333333\\
40	0.175\\
};
\addlegendentry{$N=200$}

\addplot [color=black!50!green, dashed, line width=1.5pt, mark size=4.5pt, mark=o, mark options={solid, black!50!green}, forget plot]
  table[row sep=crcr]{%
0	1\\
3.63636363636364	1\\
7.27272727272727	1\\
10.9090909090909	1\\
14.5454545454545	1\\
18.1818181818182	0.9148\\
21.8181818181818	0.7556\\
25.4545454545455	0.59398\\
29.0909090909091	0.43048\\
32.7272727272727	0.26548\\
36.3636363636364	0.183133333333333\\
40	0.175\\
};
\addplot [color=black, dashdotted, line width=1.5pt]
  table[row sep=crcr]{%
0	1\\
3.63636363636364	1\\
7.27272727272727	1\\
10.9090909090909	1\\
14.5454545454545	0.87995\\
18.1818181818182	0.70995\\
21.8181818181818	0.53795\\
25.4545454545455	0.36495\\
29.0909090909091	0.19095\\
32.7272727272727	0.136\\
36.3636363636364	0.13\\
40	0.125\\
};
\addlegendentry{$N=400$}

\addplot [color=black, dashdotted, line width=1.5pt, mark size=4.5pt, mark=o, mark options={solid, black}, forget plot]
  table[row sep=crcr]{%
0	1\\
3.63636363636364	1\\
7.27272727272727	1\\
10.9090909090909	1\\
14.5454545454545	0.87995\\
18.1818181818182	0.70995\\
21.8181818181818	0.53795\\
25.4545454545455	0.36495\\
29.0909090909091	0.19095\\
32.7272727272727	0.136\\
36.3636363636364	0.13\\
40	0.125\\
};

\addplot [color=blue, line width=1.5pt, mark size=3.6pt, mark=square, mark options={solid, blue}, forget plot]
  table[row sep=crcr]{%
0	1\\
3.63636363636364	1\\
7.27272727272727	1\\
10.9090909090909	1\\
14.5454545454545	1\\
18.1818181818182	1\\
21.8181818181818	0.983\\
25.4545454545455	0.896\\
29.0909090909091	0.796\\
32.7272727272727	0.696\\
36.3636363636364	0.521666666666667\\
40	0.5\\
};
\addplot [color=black!50!green, dashed, line width=1.5pt, mark size=3.6pt, mark=square, mark options={solid, black!50!green}, forget plot]
  table[row sep=crcr]{%
0	1\\
3.63636363636364	1\\
7.27272727272727	1\\
10.9090909090909	1\\
14.5454545454545	1\\
18.1818181818182	0.998\\
21.8181818181818	0.923\\
25.4545454545455	0.824\\
29.0909090909091	0.724\\
32.7272727272727	0.624\\
36.3636363636364	0.501666666666667\\
40	0.5\\
};
\addplot [color=black, dashdotted, line width=1.5pt, mark size=3.6pt, mark=square, mark options={solid, black}, forget plot]
  table[row sep=crcr]{%
0	1\\
3.63636363636364	1\\
7.27272727272727	1\\
10.9090909090909	1\\
14.5454545454545	0.995\\
18.1818181818182	0.909\\
21.8181818181818	0.809\\
25.4545454545455	0.709\\
29.0909090909091	0.609\\
32.7272727272727	0.514\\
36.3636363636364	0.5\\
40	0.5\\
};
\addplot [color=red, line width=1.5pt, mark size=4.5pt, mark=triangle, mark options={solid, red}]
  table[row sep=crcr]{%
0	1\\
3.63636363636364	1\\
7.27272727272727	1\\
10.9090909090909	1\\
14.5454545454545	1\\
18.1818181818182	1\\
21.8181818181818	1\\
25.4545454545455	0.974\\
29.0909090909091	0.889\\
32.7272727272727	0.789\\
36.3636363636364	0.648333333333333\\
40	0.5\\
};
\addlegendentry{Without RIS}

\end{axis}

\begin{axis}[%
width=6.135in,
height=5.323in,
at={(0in,0in)},
scale only axis,
xmin=0,
xmax=1,
ymin=0,
ymax=1,
axis line style={draw=none},
ticks=none,
axis x line*=bottom,
axis y line*=left
]
\end{axis}
\end{tikzpicture}
	\caption{The outage probability for $K=2$, $\gamma_{th}=\qty{0}{\dB}$, $d_{f_k}=\{4,5\}\qty{}{\m}$, $d_{u_k}=\{5,5\}\qty{}{\m}$, $d_h=\qty{2}{\m}$, and $d_{g_k}=\{4.5, 5.4\}\qty{}{\m}$.} 
	\label{fig:Outage_K2} 
    \vspace{-0mm}
\end{figure}
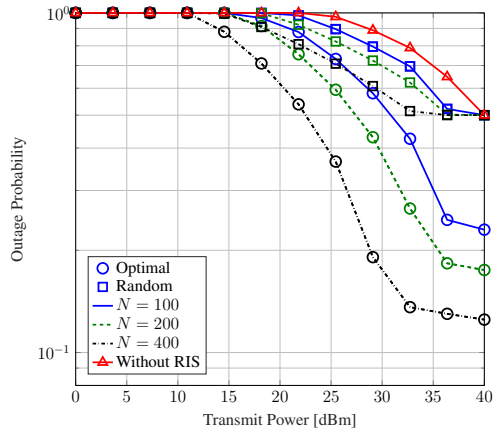

\begin{figure}[!tbp]\vspace{-0mm}
  \centering
  \fontsize{14}{14}\selectfont 
    \resizebox{.52\totalheight}{!}{
%
%
\begin{tikzpicture}

\begin{axis}[%
width=4.755in,
height=4.338in,
at={(0.798in,0.586in)},
scale only axis,
xmin=0,
xmax=30,
xlabel style={font=\color{white!15!black}},
xlabel={Transmit Power [dBm]},
ymode=log,
ymin=0.05,
ymax=1,
yminorticks=true,
ylabel style={font=\color{white!15!black}},
ylabel={Average BER},
axis background/.style={fill=white},
xmajorgrids,
ymajorgrids,
yminorgrids,
legend style={legend cell align=left, align=left, draw=white!15!black}
]

\addplot [color=blue, line width=1.5pt, only marks, mark size=4.5pt, mark=o, mark options={solid, blue}]
  table[row sep=crcr]{%
0	0.373248362853566\\
3.33333333333333	0.322577620952693\\
6.66666666666667	0.267638801787098\\
10	0.21298220322777\\
13.3333333333333	0.166508348379985\\
16.6666666666667	0.131747787461581\\
20	0.115282476202438\\
23.3333333333333	0.108179255179862\\
26.6666666666667	0.101574498608234\\
30	0.0995183897117199\\
};
\addlegendentry{Optimal}

\addplot [color=blue, line width=1.5pt, only marks, mark size=3.6pt, mark=square, mark options={solid, blue}]
  table[row sep=crcr]{%
0	0.393564094592943\\
3.33333333333333	0.348387359752957\\
6.66666666666667	0.29830953511873\\
10	0.238859248298115\\
13.3333333333333	0.191020852908471\\
16.6666666666667	0.147523401978699\\
20	0.123128015994258\\
23.3333333333333	0.11254815118697\\
26.6666666666667	0.103513024764243\\
30	0.100698729393785\\
};
\addlegendentry{Random}

\addplot [color=blue, line width=1.5pt]
  table[row sep=crcr]{%
0	0.373248362853566\\
3.33333333333333	0.322577620952693\\
6.66666666666667	0.267638801787098\\
10	0.21298220322777\\
13.3333333333333	0.166508348379985\\
16.6666666666667	0.131747787461581\\
20	0.115282476202438\\
23.3333333333333	0.108179255179862\\
26.6666666666667	0.101574498608234\\
30	0.0995183897117199\\
};
\addlegendentry{$N=100$}

\addplot [color=black!50!green, dashed, line width=1.5pt]
  table[row sep=crcr]{%
0	0.318577055400619\\
3.33333333333333	0.261229346934514\\
6.66666666666667	0.207766048852667\\
10	0.160144435099908\\
13.3333333333333	0.131914607599379\\
16.6666666666667	0.114100847444358\\
20	0.107454831918406\\
23.3333333333333	0.103314210015214\\
26.6666666666667	0.0982359844373218\\
30	0.0997119576276322\\
};
\addlegendentry{$N=200$}

\addplot [color=black!50!green, dashed, line width=1.5pt, mark size=4.5pt, mark=o, mark options={solid, black!50!green}, forget plot]
  table[row sep=crcr]{%
0	0.318577055400619\\
3.33333333333333	0.261229346934514\\
6.66666666666667	0.207766048852667\\
10	0.160144435099908\\
13.3333333333333	0.131914607599379\\
16.6666666666667	0.114100847444358\\
20	0.107454831918406\\
23.3333333333333	0.103314210015214\\
26.6666666666667	0.0982359844373218\\
30	0.0997119576276322\\
};
\addplot [color=black, dashdotted, line width=1.5pt]
  table[row sep=crcr]{%
0	0.281258069336841\\
3.33333333333333	0.217651946449099\\
6.66666666666667	0.168480514167426\\
10	0.134745452414542\\
13.3333333333333	0.113474912243549\\
16.6666666666667	0.104453524411297\\
20	0.0987414702171804\\
23.3333333333333	0.0950140864234176\\
26.6666666666667	0.0953784327436249\\
30	0.0988312457669137\\
};
\addlegendentry{$N=400$}

\addplot [color=black, dashdotted, line width=1.5pt, mark size=4.5pt, mark=o, mark options={solid, black}, forget plot]
  table[row sep=crcr]{%
0	0.281258069336841\\
3.33333333333333	0.217651946449099\\
6.66666666666667	0.168480514167426\\
10	0.134745452414542\\
13.3333333333333	0.113474912243549\\
16.6666666666667	0.104453524411297\\
20	0.0987414702171804\\
23.3333333333333	0.0950140864234176\\
26.6666666666667	0.0953784327436249\\
30	0.0988312457669137\\
};

\addplot [color=blue, line width=1.5pt, mark size=3.6pt, mark=square, mark options={solid, blue}, forget plot]
  table[row sep=crcr]{%
0	0.393564094592943\\
3.33333333333333	0.348387359752957\\
6.66666666666667	0.29830953511873\\
10	0.238859248298115\\
13.3333333333333	0.191020852908471\\
16.6666666666667	0.147523401978699\\
20	0.123128015994258\\
23.3333333333333	0.11254815118697\\
26.6666666666667	0.103513024764243\\
30	0.100698729393785\\
};
\addplot [color=black!50!green, dashed, line width=1.5pt, mark size=3.6pt, mark=square, mark options={solid, black!50!green}, forget plot]
  table[row sep=crcr]{%
0	0.352038079357835\\
3.33333333333333	0.298777520482806\\
6.66666666666667	0.245856585477623\\
10	0.189921371195376\\
13.3333333333333	0.150995250560443\\
16.6666666666667	0.123734168405018\\
20	0.111275839352717\\
23.3333333333333	0.105842044407823\\
26.6666666666667	0.101075690425518\\
30	0.10252552207955\\
};
\addplot [color=black, dashdotted, line width=1.5pt, mark size=3.6pt, mark=square, mark options={solid, black}, forget plot]
  table[row sep=crcr]{%
0	0.321365002184856\\
3.33333333333333	0.259659628624935\\
6.66666666666667	0.203514393410749\\
10	0.158790051056404\\
13.3333333333333	0.126003185612243\\
16.6666666666667	0.110919348656333\\
20	0.101364206225488\\
23.3333333333333	0.0971506866870481\\
26.6666666666667	0.0967033120775398\\
30	0.0990877564306845\\
};
\addplot [color=red, line width=1.5pt, mark size=4.5pt, mark=triangle, mark options={solid, red}]
  table[row sep=crcr]{%
0	0.430331717758105\\
3.33333333333333	0.39873603524481\\
6.66666666666667	0.36083333087059\\
10	0.308821448905933\\
13.3333333333333	0.260571159800199\\
16.6666666666667	0.19982246426153\\
20	0.157440335200103\\
23.3333333333333	0.136363009431539\\
26.6666666666667	0.118910482735962\\
30	0.114472491261296\\
};
\addlegendentry{Without RIS}

\end{axis}

\begin{axis}[%
width=6.135in,
height=5.323in,
at={(0in,0in)},
scale only axis,
xmin=0,
xmax=1,
ymin=0,
ymax=1,
axis line style={draw=none},
ticks=none,
axis x line*=bottom,
axis y line*=left
]
\end{axis}
\end{tikzpicture}
	\caption{BER of BPSK for $K=2$,  $d_{f_k}=\{4,5\}\qty{}{\m}$, $d_{u_k}=\{5,5\}\qty{}{\m}$, $d_h=\qty{2}{\m}$, and $d_{g_k}=\{4.5, 5.4\}\qty{}{\m}$.}
	\label{fig:BER_K2}  \vspace{-0mm}
\end{figure}
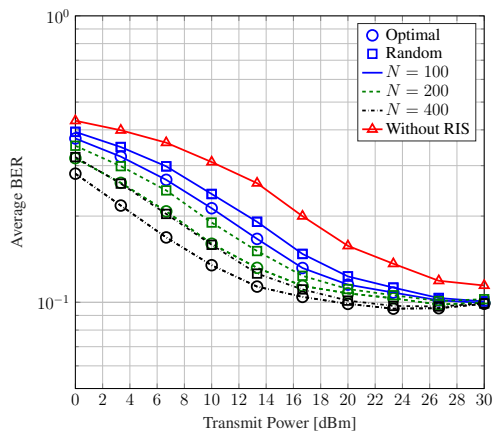

 {Fig. \ref{fig:Outage_K2} and Fig. \ref{fig:BER_K2} show the outage and BER performance of our proposed RIS-aided system with two tags ($K=2$), which significantly improves their outage and rate performances compared to the non-RIS setup. Our optimal RIS phase-shift design for a RIS with  \num{100} elements clearly outperforms the other options, e.g., resulting in a reduction in the outage of \qty{37.2}{\percent} and \qty{55.1}{\percent} compared to the random phase-shift case and the no-RIS case at a transmit power of $P=\qty{30}{\dB m}$, respectively. Furthermore, when the RIS elements increase to \num{200} and \num{400} with optimal phase-shift  design, an additional outage reduction of \qty{108.6}{\percent} and \qty{368.0}{\percent} is achieved at the same transmit power. On the other hand, increasing the number of elements with the random phase-shift design does not yield significant improvement. Therefore, the proposed optimal phase-shift design, RIS placement in the  emitter-to-tag path, and increasing the number of its elements prove to be an effective and green  solution for passive IoT networks.}  

Figs. \ref{fig:Outage_K2} and \ref{fig:BER_K2} also indicate that the reliability of the system in the high SNR regime is limited by interference between tags. Increasing the size of the RIS alone does not provide a solution. However, one effective approach is to employ a multi-antenna emitter and reader. This enables optimal energy beamforming and receive filtering in the spatial domain, allowing interference suppression and performance enhancement. This is a fertile future research direction.

\section{Conclusion}\label{conclusion}
Passive tags face drawbacks such as activation failures, limited activation distances, and low data rates due to energy scarcity. To address these issues, we explored the use of a RIS to enhance the RF signal power received by the tags. We employed a linear EH model and developed analytical techniques to quantify the benefits of RIS deployment. Specifically, we statistically analyzed the maximum received power at the tag and optimized the SNR at the reader. We derived performance metrics such as achievable rate, outage probability,  BER, and diversity order for the system. Additionally, we examined the impact of RIS phase shift errors and proposed a RIS optimization algorithm for multi-tag networks. Overall, the use of a RIS can significantly enhance the received power at the tag, thereby increasing activation distances, achievable rates, and communication ranges. Importantly, these improvements can be achieved while maintaining the fundamental design of the tags, which preserves their cost, size, form factor, and batteryless advantages. Some of the specific insights into the use of the RIS are as follows. 
\begin{enumerate}
    \item The low activation distance (that is, less than \qty{6}{\m}) can be improved (Fig.~\ref{fig:Rx_Power}). For example,    \num{100} and \num{400} elements  improve   it to  \qty{13}{\m} and \qty{34}{\m} for a single tag.   RIS also improves rate, outage, and BER.  Compared to the non-RIS setup, for $P=20$ dBm, \num{200} RIS elements and a  BPSK  tag can achieve a rate gain of \qty{2.2}{bps/\Hz} (Fig. \ref{fig:rateGain}), an outage probability gain of $\sim 10^{2}$ (Fig. \ref{fig:outage_m_N_WO}), and a BER gain of $\sim 10^{4.6}$ (Fig. \ref{fig:BER_m_N_WO}).
    
    \item The diversity order  is independent of the number of RIS elements, $N$.  However,  the coding and array gains depend on  $N$ (Fig. \ref{fig:BER_m_N_WO}). The optimal placement of the RIS is found to be as close to the emitter or the tags as possible (Fig. \ref{fig:EH_Distance}).

    \item When imperfect channel estimates and hardware limitations are present, continuous phase shifts over the entire interval of $[0, 2\pi]$ are not feasible.  Discrete phase shifts with various quantization levels can then be employed. Fortunately, even $D=4$ has negligible performance degradation (Figure~\ref{fig:rate_phase}).

\end{enumerate}
These insights address the rate requirement, reliability, and coverage,  for establishing passive IoT. Our study   spurs  further research into the following topics:  
\begin{enumerate}
    \item \textit{Imperfect CSI and Interference Cancellation}: The assumption of perfect CSI  yields  the upper bounds of the performance metrics. However, especially for multiple tags,  accurate CSI acquisition is challenging.    introduces residual interference in  the direct \E-\R{} link and the reflected RIS-\R{} link. With our analytical tools,   interference cancellation techniques could be developed  (Section \ref{sec:sytem_model}).  The imperfect CSI case should be researched. 

  \item   {\textit{Active RIS}: Using an active RIS or a hybrid RIS which combines both active and passive elements, can further improve the tag activation and system performance. However, the total energy consumption must be considered especially for green networks. This is  a potential future extension of this work.}

    \item  \textit{Multiple-Antenna Nodes:} We considered   only signal-antenna tags  and the emitter (Fig.~ \ref{fig:system_model}). However, the use of    multiple antennas  offers spatial diversity gains that can be exploited to enhance performance. Thus, energy beamforming and reception filters can be designed at the emitter and reader to not only deliver more power to the passive tags but also suppress the mutual interference between tags and enhance performance. 
\end{enumerate}

\appendix

\subsection{Gaussian Approximation}\label{Gaussian}

First, we find the mean and the variance of  $\Lambda$. Since $ \Lambda $ is a product of two independent  variables, we find that $\mu_{\Lambda} = \mu_{\alpha_u} \mu_{Y}$, and $\sigma^2_{\Lambda} = \mu^{(2)}_{\alpha_u} \mu^{(2)}_{Y} -  \mu_Y^2 \mu_{\alpha_u}^2$,
 where  $\mu_Y = \mu_{\alpha_f} + \mu_X$, $\mu^{(2)}_{\alpha_u} = \sigma^2_{\alpha_u} + \mu_{\alpha_u}^2$ \eqref{nakagami_mean_variance}\footnote{For  Nakagami-${m_a}$  $\alpha_a$ given in \eqref{Nakagami}, $\alpha_a^2 \sim \Gamma(k, \lambda)$, where $k = m_a$ and $\lambda = \Omega_a/m_a$, with $\mu_{\alpha_a^2} = k \lambda$ and  $\sigma^2_{\alpha_a^2} = k\lambda^2$.}, $\mu^{(2)}_{Y} =  \sigma^2_Y + \mu_Y^2$, and $\sigma^2_Y  = \sigma^2_{\alpha_f} + \sigma^2_{X} $.

 Direct Gaussian approximations do not work here. Thus, our approach  involves two steps: (i) we use the moment matching technique to approximate $Y$ and $\Lambda$ as  Gaussian variables with $Y \sim \mathcal{N}(\mu_{Y}, \sigma^2_{Y})$ and $\Lambda \sim \mathcal{N}(\mu_{\Lambda}, \sigma^2_{\Lambda})$. (ii) however, since $Y$ and $\Lambda$ are  positive random variables, i.e., $Y,\Lambda \geq 0$, we truncate the approximated Gaussian distribution such that it only lies within the interval $[0,\infty)$, i.e., positive real axis  \cite{papoulis2002probability,florescu2014probability}. Hence the approximated CDF and PDF of $Y$ and $\Lambda$ are respectively given as
\begin{subequations}
    \begin{eqnarray}\label{gaussian_appro}
        F_{C}(r) &=&  1 - \Psi Q \left(\frac{r-\mu_{C}}{\sigma_{C}} \right),\quad \\
        f_{C}(r) &=& \frac{\Psi}{\sqrt{2\pi \sigma^2_{C}}}\exp{\left(-\frac{(r-\mu_{C})^2}{2 \sigma^2_{C}}\right)}, 
    \end{eqnarray}
\end{subequations}
    for $r\geq0$ and $C\in \{Y, \Lambda\}$. Here, $\Psi=1/Q(-\mu_{C}/\sigma_{C})$ achieves  normalization: $\int_{0}^{\infty} f_{C}(x) dx =1 $.

\subsection{Asymptotic Achievable Rate}\label{asym_rate}
{When  the RIS size  increases indefinitely, 
 i.e., $N\rightarrow \infty$, $\mu_X$ and $\sigma_X^2$ can be approximated as
\begin{eqnarray}
    \!\!\!\mu_X  \!&\approx& \!N \!\left( \!\eta \frac{\Gamma(m_g+1/2) \Gamma(m_h+1/2)}{\Gamma(m_g) \Gamma(m_h)} \sqrt{\frac{\Omega_g \Omega_h}{m_g m_h}}\right) \!=\! N \bar{\mu}_X, \quad\\
    \!\!\!\sigma_X^2 \!&\approx&\! N \!\left( \!\eta^2 \frac{\Gamma(m_g+1) \Gamma(m_h+1)}{\Gamma(m_g) \Gamma(m_h)} {\frac{\Omega_g \Omega_h}{m_g m_h}} - \bar{\mu}_X^2\right)\!=\! N \bar{\sigma}_X^2, \quad
\end{eqnarray}
where $\eta = \eta_n, \forall n$.  Next, the SNR in $\mathcal{R}_{\rm{ub}}$ can be asymptotically evaluated as
\begin{eqnarray}
    \!\!\! \lim_{N\rightarrow \infty} \! \gamma_{\rm{ub}} &=& \lim_{N\rightarrow \infty} \! N^2\bar{\gamma} (\sigma_{\alpha_u}^2 \!+\! \mu_{\alpha_u}^2)\! \nonumber \\ 
    &&\times \left( \frac{\sigma_{\alpha_f}^2}{N^2} + \frac{\bar{\sigma}_{X}^2}{N} + \frac{\mu_{\alpha_f}^2}{N^2} + \frac{2 \mu_{\alpha_f} \bar{\mu}_X}{N}+ \bar{\mu}_X^2 \right) \nonumber \\
    &=& \bar{\gamma}_A (\sigma_{\alpha_u}^2 \!+\! \mu_{\alpha_u}^2)\bar{\mu}_X^2,
\end{eqnarray}
where $\bar{\gamma}_A = \beta P_A /\sigma_z^2$ and $\lim_{N \rightarrow \infty} P = P_A/N^2$. Similarly, we can show that the SNR of the rate lower bound converges for the same asymptotic SNR,  i.e., $\lim_{N\rightarrow \infty}  \gamma_{\rm{lb}}=\bar{\gamma}_A (\sigma_{\alpha_u}^2 \!+\! \mu_{\alpha_u}^2)\bar{\mu}_X^2$. Then, the asymptotic achievable rate can be derived as given in \eqref{eqn_asym_rate}. }

\subsection{Average BER}\label{BER_average}
The average BER of the system with  BPSK is given as 
    \begin{eqnarray}\label{eqn::BER}
    \nonumber  \bar{P}_{\rm{BER}} &=& \mathbb{E} \{ \lambda {Q}( {\sqrt {\nu \gamma} }) \}  \overset{(a)}{\mathop{=}} \! \int_{0}^{\infty } \!\!\! \lambda {Q}(x {\sqrt {\nu \bar{\gamma}} }) {{f}_{\Lambda}}\left( x\right)dx \nonumber \\
    &\overset{(b)}{\mathop{=}}& \! \int_{0}^{\infty }\!\!\! \lambda \exp{\left(-A \nu \bar{\gamma} x^2 - B \sqrt{\nu \bar{\gamma}} x-C  \right)} {{f}_{\Lambda}}\left( x\right)dx, ~ ~\quad
\end{eqnarray}
where $(a)$ is due to submitting  $\gamma = \bar{\gamma} \Lambda^2$ and $f_\Lambda(r)$ is given in (\ref{gamma_appro}). Besides, $Q$ function is substituted with its tight approximation: $Q(x) = \exp{(-Ax^2-Bx-C)}$, where $A= 0.3842, B = 0.7640$ and $C= 0.6964$ \cite{Lopez2011}.

 By using the PDF of $\Lambda$ given in (\ref{gamma_appro}), we have
    
    \begin{eqnarray}
        \nonumber \!\bar{P}_{\rm{BER}}^{\rm{Gamma}} \!&=&\! A_1 \!\int_{0}^{\infty }\!\!\! \exp{\left(- \bar{A} x^2 - \bar{B} x  \right)}  x^{k_{\Lambda}-1}  \exp{\left(-\frac{x }{\lambda_{\Lambda}}\right)} dx \nonumber \\
         &=& \!A_1 \! \int_{0}^{\infty } \!\!\! \exp{\left(- \bar{A} x^2 - \hat{B}_1 x  \right)} x^{k_{\Lambda}-1}  dx \nonumber \\
        &\overset{(d)}{\mathop{=}}& A_1  (2\bar{A})^{-\frac{k_{\Lambda}}{2}} \Gamma(k_{\Lambda})  \exp{\left(\frac{\hat{B}_1^2}{8\bar{A}} \right)}  D_{-k_{\Lambda}}\left(\frac{\hat{B}_1^2}{\sqrt{2\bar{A}}} \right), \quad
    \end{eqnarray}
    where $\bar{A} = A \nu \bar{\gamma}, \bar{B} = B \sqrt{\nu \bar{\gamma}} $, $A_1 = \exp{(-C)}/(\Gamma(k_{\Lambda}) \lambda_{\Lambda}^k)$, and  $\hat{B}_1 = \bar{B}+1/\lambda_{\Lambda}$. Beside, $(d)$ is obtained by using the integral \cite[eq.~(3.462.1)]{Gradshteyn2007}.

\subsection{Asymptotic Outage probability and BER}\label{Asymptotic_lem}
   In order to derive   the  outage probability and   BER in the high SNR domain, we  use the  first-order polynomial expansion of the PDF approximations.
   
{\textit{\textbf{Asymptotic SNR Outage Probability}}}:
   this  can be approximated as \cite{Zhengdao2003}
    \begin{eqnarray}
         \lim_{\bar{\gamma} \rightarrow \infty} P_{\rm{out}}^{\rm{SNR}} = P_{\rm{out}}^{\infty}  \approx O_c \left(\frac{\gamma_{\rm{th}}}{\bar{\gamma}} \right)^{G_d} + \mathcal{O}\left(\bar{\gamma}^{-(G_d+1)} \right),  
    \end{eqnarray}
    where $G_d$ is the diversity order and $O_c$ is a measure of the coding gain \cite{Zhengdao2003}. By approximating the PDF of $\Lambda$ for $r \!\rightarrow \!0^+$ as $ f_{\Lambda}^{0^+}\!(r) \!=\! 1/{\Gamma(k_{\Lambda})\lambda_{\Lambda}^{k_{\Lambda}}} r^{k_{\Lambda}-1} \!+\! \mathcal{O}(r^{k_{\Lambda}})$, $P_{out}^{\infty}$ is derived as
    \begin{eqnarray}\label{outage_asym1}
         P_{\rm{out}}^{\infty} = \frac{1}{\Gamma(k_{\Lambda}+1)\lambda_{\Lambda}^{k_{\Lambda}}}  \left( \frac{\gamma_{\rm{th}}}{\bar{\gamma}}\right)^{{k_{\Lambda}}/{2}} + \mathcal{O}\left(\bar{\gamma}^{-({k_{\Lambda}}/{2}+1)}\right),
    \end{eqnarray}
    where the diversity order, $G_d=k_{\Lambda}/2$, and the coding gain, $O_c={1}/({\Gamma(k_{\Lambda}+1)\lambda_{\Lambda}^{k_{\Lambda}}})$. 
    
{\textit{\textbf{Asymptotic Average BER}}:
The  BER can be asymptotically approximated as \cite{Zhengdao2003}
    \begin{eqnarray}
         \lim_{\bar{\gamma} \rightarrow \infty} P_{\rm{BER}} =\bar{P}_{\rm{BER}}^{\infty}  \approx   \left(G_a \bar{\gamma} \right)^{-G_d} + \mathcal{O}\left(\bar{\gamma}^{-(G_d+1)} \right), 
    \end{eqnarray}
    where $G_a$ is the array gain. Thereby, the asymptotic  BER is derived as
    \begin{eqnarray}\label{high_snr_ber1}
        \bar{P}_{\rm{BER}}^{\infty} &=& \int_{0}^{\infty }\lambda {Q}(x {\sqrt {\nu \bar{\gamma}} }) {{f}_{\Lambda}^{0+}}\left( x\right)d x \nonumber \\
        &\overset{(e)}{\mathop{=}}&  ({\lambda_E} \bar{\gamma})^{-k_{\Lambda}/2} +\mathcal{O}\left(\bar{\gamma}^{-(k_{\Lambda}/2+1)} \right),
    \end{eqnarray} 
    where $\lambda_E =\nu \left[\left(2^{k_{\Lambda}/2}/3+(3/2)^{k_{\Lambda}/2} \right) C_1 \right]^{2/k_{\Lambda}}$ and $C_1={\lambda} \Gamma(k_{\Lambda}/2)/(8\Gamma(k_{\Lambda}) \lambda_{\Lambda}^{k_{\Lambda}} )$. Moreover, $(e)$ is obtained by substituting $Q$ function with its tight approximation: $Q(x) = \exp{(-x^2/2)}/12+\exp{(-2x^2/3)}/4$ \cite{Lopez2011}. Then, the array gain, $G_a=\lambda_E$.

\bibliographystyle{IEEEtran}
\bibliography{IEEEabrv,Refs}

\end{document}